\DeclareMathOperator{\tr}{Tr}
\DeclareMathOperator{\id}{id}
\theoremstyle{definition}
\newtheorem{theorem}{Theorem}
\newtheorem{proposition}[theorem]{Proposition}
\theoremstyle{definition}
\newtheorem{definition}[theorem]{Definition}
\theoremstyle{remark}
\newtheorem{conjecture}{Conjecture}
\newtheorem{remark}[conjecture]{Remark}
\begin{document}
\author{Hayata Yamasaki}
\email{hayata.yamasaki@gmail.com}
\affiliation{Photon Science Center, Graduate School of Engineering, The University of Tokyo, 7--3--1 Hongo, Bunkyo-ku, Tokyo 113--8656, Japan}
\affiliation{Institute for Quantum Optics and Quantum Information (IQOQI), Austrian Academy of Sciences, Boltzmanngasse 3, 1090 Vienna, Austria}
\affiliation{Atominstitut,  Technische  Universit{\"a}t Wien, Stadionallee 2, 1020 Vienna, Austria}
\author{Madhav Krishnan Vijayan}
\email{madhavkrishnan.vijayan@uts.edu.au}
\affiliation{Centre for Quantum Software \& Information (UTS:QSI), University of Technology Sydney, Sydney NSW, Australia}
\author{Min-Hsiu Hsieh}
\email{min-hsiu.hsieh@foxconn.com}
\affiliation{Centre for Quantum Software \& Information (UTS:QSI), University of Technology Sydney, Sydney NSW, Australia}
\affiliation{Hon Hai Quantum Computing Research Center, Taipei City, Taiwan}

\title{Hierarchy of quantum operations in manipulating coherence and entanglement}

\begin{abstract}
Quantum resource theory under different classes of quantum operations advances multiperspective understandings of inherent quantum-mechanical properties, such as quantum coherence and quantum entanglement.
We establish hierarchies of different operations for manipulating coherence and entanglement in distributed settings, where at least one of the two spatially separated parties are restricted from generating coherence.
In these settings, we introduce new classes of operations and also characterize those maximal, \textit{i.e.}, the resource-non-generating operations,
progressing beyond existing studies on incoherent versions of local operations and classical communication and those of separable operations.
The maximal operations admit a semidefinite-programming formulation useful for numerical algorithms, whereas the existing operations not.
To establish the hierarchies, we prove a sequence of inclusion relations among the operations by clarifying tasks where separation of the operations appears.
We also demonstrate an asymptotically non-surviving separation of the operations in the hierarchy in terms of performance of the task of assisted coherence distillation, where a separation in a one-shot scenario vanishes in the asymptotic limit.
Our results serve as fundamental analytical and numerical tools to investigate interplay between coherence and entanglement under different operations in the resource theory.
\end{abstract}

\maketitle

\section{Introduction}

Advantages of quantum information processing over conventional classical information processing, whether in computation~\cite{W5,H3}, communication~\cite{Wehnereaam9288}, or cryptography~\cite{Pirandola2019}, arise from various inherent properties of quantum mechanics, such as quantum coherence and quantum entanglement.
Quantum resource theories~\cite{RevModPhys.91.025001,K1,kuroiwa2021consistent} have grown to be an important theoretical framework for quantitative analyses of such properties from operational perspectives using information processing tasks.
A resource theory is conventionally defined by specifying a class of allowed operations as \textit{free operations}.
One way to choose free operations may be to use practical or experimental restrictions.
For example, the resource theory of entanglement can be defined by considering a distributed setting for multiple parties with access only to local operations on each party's quantum system~\cite{H2,P1,E5}; then, local operations and classical communication (LOCC)~\cite{doi:10.1063/1.1495917,Chitambar2014,Y5} may arise as a natural candidate for free operations.
Entanglement serves as a resource for distributed quantum information processing where spatially separated parties are restricted to LOCC, by enabling quantum teleportation~\cite{B5} and allowing for the implementation of nonlocal operations to the shared system~\cite{doi:10.1002/qute.201800066,PhysRevA.96.032330,Y5}.
Yet importantly, to deepen our understandings of entanglement, it is also crucial to introduce and exploit larger classes of free operations than LOCC, such as separable (SEP) operations~\cite{PhysRevLett.78.2275,Rains1997} and positive-partial-transpose (PPT) operations~\cite{Rains2000}, in analytical and numerical studies of entanglement~\cite{H2,P1,E5}.
Along with the studies of entanglement, distributed settings also commonly arise in other resource theories, where each party has a restricted power of manipulating given quantum resources rather than performing arbitrary local operations, and needs assistance of another party in using the given resources~\cite{M1,R1,PhysRevLett.116.070402,PhysRevLett.122.130601,PhysRevA.101.052305}.

In this paper, we investigate the distributed settings that involve two prominent resource theories, entanglement and coherence~\cite{PhysRevLett.117.020402,S2,Matera_2016,PhysRevX.6.041028,PhysRevX.8.031005}.
In particular, in the spirit of studying LOCC, SEP, and PPT operations in entanglement theory,
we introduce and study different natural classes of operations in the distributed settings of manipulating coherence and entanglement, and compare their relative power in performing information theoretic tasks.
Coherence, \textit{i.e.}, superposition of a certain set of quantum states, has been shown to play important roles in quantum biology~\cite{doi:10.1080/00405000.2013.829687}, quantum thermodynamics~\cite{goold2016role} and photonic experiments~\cite{doi:10.1098/rspa.2017.0170}, where certain states are easier to create than their superposition.
The resource theory of coherence~\cite{S3} is a well-established resource theory that is useful for introducing classifications, partial orders, and quantifications of quantum coherence.
The resource theory of coherence considers situations where coherence cannot be created on a quantum system due to a restriction of operations for manipulating the system.
The free states in the resource theory of coherence are states represented as diagonal density operators in some fixed basis.
As is the case of entanglement, several different free operations that preserve diagonal density operators have been well investigated, such as incoherent operations (IO)~\cite{PhysRevLett.113.140401,PhysRevLett.116.120404}, maximally incoherent operations (MIO)~\cite{ABERG2004326,Aberg2006}, strictly incoherent operations (SIO)~\cite{PhysRevLett.116.120404,PhysRevX.6.041028}, and physically incoherent operations (PIO)~\cite{PhysRevLett.117.030401,PhysRevA.94.052336}, to name a few.
The resource theory of coherence in the \textit{distributed} settings has also attracted attentions of broad interests~\cite{PhysRevLett.117.020402,S2,Matera_2016,PhysRevX.6.041028,PhysRevX.8.031005}, as with the distributed settings in other resource theories.
These resource theories for the distributed manipulation of coherence provide a framework for investigating an interplay between coherence and entanglement in various information processing tasks such as distillation and dilution of these resources~\cite{PhysRevLett.117.020402}, assisted distillation of coherence~\cite{PhysRevLett.116.070402,M1,R1}, quantum state merging~\cite{PhysRevLett.116.240405}, quantum state redistribution~\cite{A11}, and multipartite state transformation~\cite{PhysRevA.99.042306,PhysRevA.100.052316}.
From a practical perspective, the distributed manipulation of coherence naturally arises in photonic systems as demonstrated in recent experiments~\cite{Wu:17,PhysRevLett.121.050401,npj_6_22}.

\begin{figure}[t]
  \centering
  \includegraphics[width=3.4in]{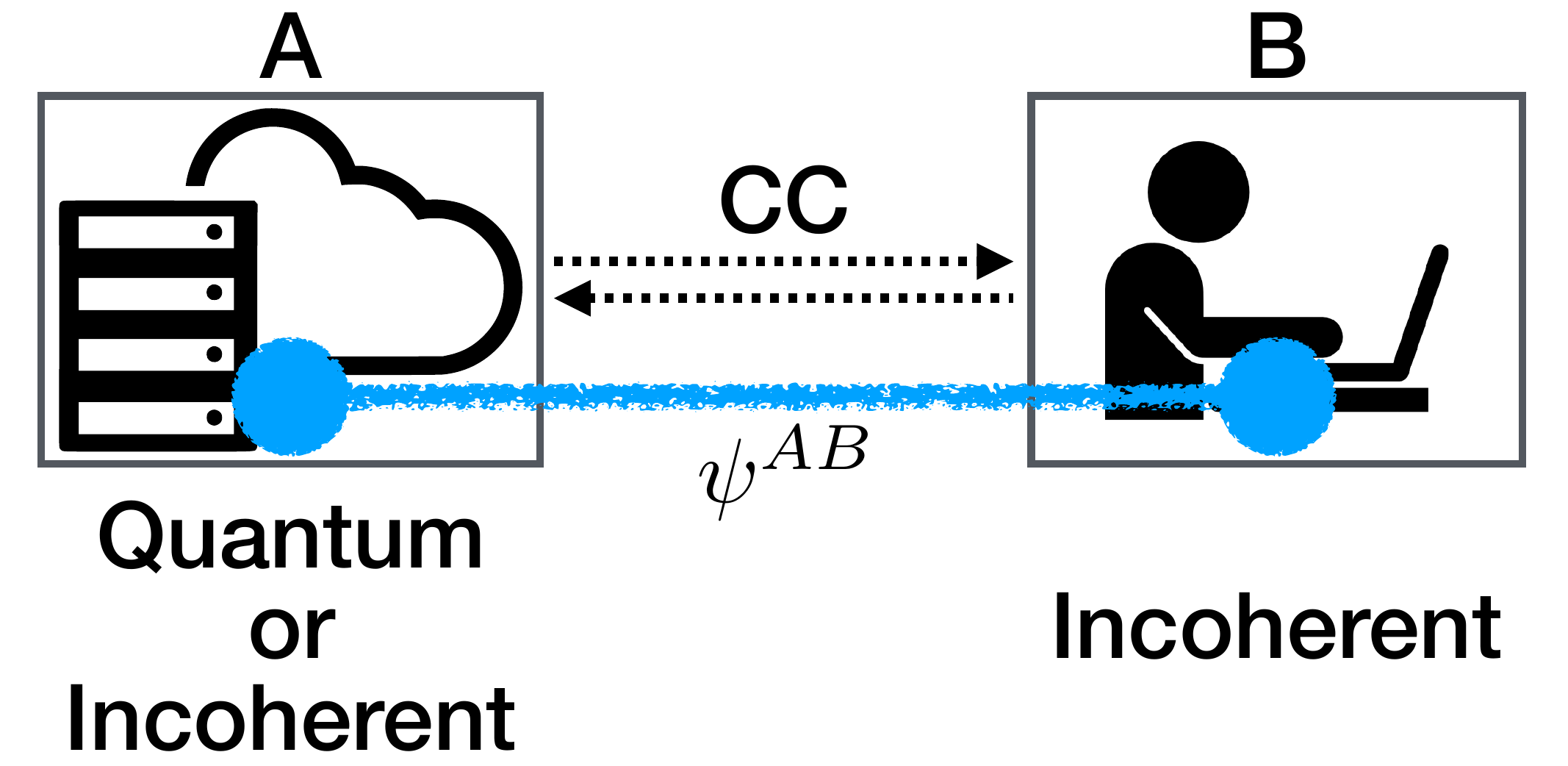}
  \caption{\label{fig:introduction}Distributed manipulation of coherence and entanglement of a quantum state $\psi^{AB}$ shared between two specially separated parties $A$ and $B$. The parties manipulate their local quantum systems, while they can use classical communication (CC). Local operations and classical communication with one party $B$ restricted to incoherent operations are called LQICC, which can be regarded as a client-server setting where the ability of the client $B$ to generate coherence is restricted while the server $A$ can perform any local quantum operation to assist $B$. Local operations and classical communication with both parties $A$ and $B$ restricted to incoherent operations are called LICC, where the abilities of $A$ and $B$ are the same.}
\end{figure}

In the distributed settings of manipulating coherence, especially for two parties $A$ and $B$, LOCC with one party restricted to IO are called LQICC, and those with both parties restricted to IO are LICC~\cite{S2}, as depicted in Fig.~\ref{fig:introduction}.
LQICC can be regarded as a client-server setting where the client's ability to generate coherence is restricted, while the abilities of two parties in LICC are the same.
The set of free states for LQICC, that is, the states that can be obtained from any initial state by LQICC, yields the set of quantum-incoherent (QI) states, which is incoherent on one of the parties.
The set of free states for LICC is incoherent states on both of the two parties.

However, to obtain deeper understandings,
it is crucial to introduce and compare different classes of operations beyond LQICC and LICC\@.
For example, in the entanglement theory, LOCC may be a conventional and well-motivated choice of free operations, but the set of LOCC is mathematically difficult to characterize and analyze.
To circumvent this problem, more general classes of operations than LOCC, such as separable operations and PPT operations, are vital to investigating performances of information processing tasks, which also yields bounds of the performance under LOCC\@.
Especially, PPT operations provide numerical algorithms for calculating performance of the entanglement-assisted tasks by means of semidefinite programming (SDP)~\cite{Watrous:2018:TQI:3240076}, even if the corresponding tasks under LOCC are hard to analyze due to its mathematical structure~\cite{Rains2000,PhysRevA.94.050301,PhysRevLett.119.180506,8707001,Regula_2019,Wang2018,PhysRevLett.125.040502}.
Similarly, in the resource theory of coherence, MIO serves as a class of operations beyond IO, and MIO provides numerical algorithms based on SDP similarly to PPT operations~\cite{PhysRevLett.121.010401}.
Importantly, even if the operations such as SEP, PPT, and MIO are defined mathematically, these different classes of operations provide efficiently calculable bounds in analyzing the information processing tasks and crucially help us to understand the properties of resources in the study of the resource theories.
In the same way, in our distributed settings of manipulating coherence and entanglement, we may suffer from the difficulty if the operations are restricted to LQICC and LICC\@.
To circumvent this difficulty, Ref.~\cite{S2} introduced a class of operations generalizing LQICC by considering SEP with one party restricted to IO, called SQI, and that generalizing LICC by considering SEP with both parties restricted to IO, called SI\@.
However, SQI and SI are insufficient for providing numerically tractable algorithms in these resource theories, in contrast with PPT operations for entanglement and MIO for coherence.

\begin{figure}[t]
  \centering
  \includegraphics[width=3.4in]{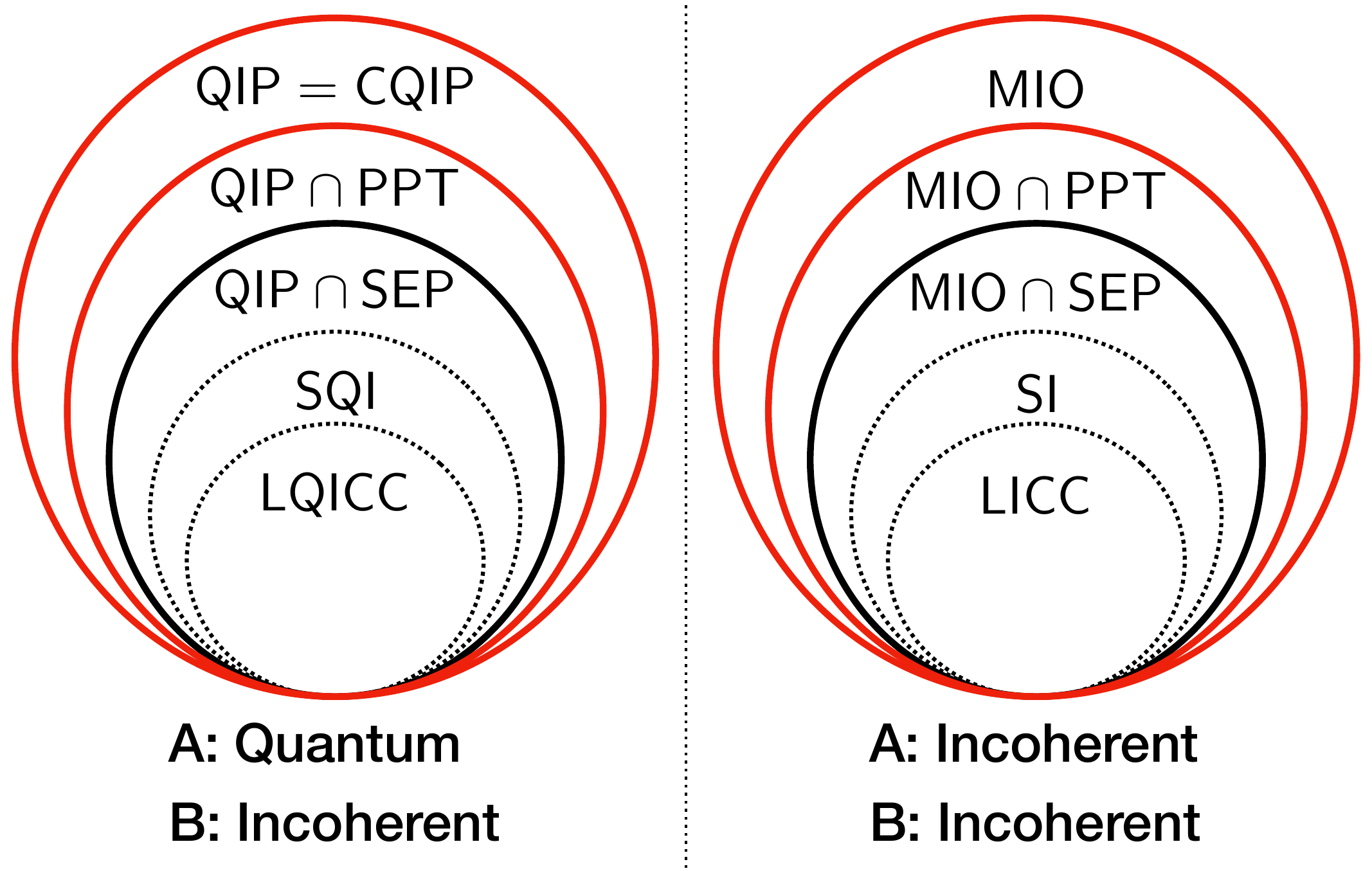}
  \caption{\label{fig:hierarchy}A hierarchy of operations for distributed manipulation of coherence and entanglement over two parties $A$ and $B$ beyond LQICC and SQI on the left, and that beyond LICC and SI on the right. The classes of operations that we introduce and characterize in this paper are shown by red and black bold circles, while the existing classes are black dotted circles. Especially, the red bold circles represent the classes of operations that have characterizations in terms of semidefinite programming (SDP), and hence can be used for numerical algorithms based on SDP\@.}
\end{figure}

Progressing beyond the above previous research on LQICC, LICC, SQI, and SI, we here introduce and analyze several classes of operations illustrated in Fig.~\ref{fig:hierarchy},
which are also summarized in Table~\ref{tab:result}.
As a generalization of LQICC and SQI to a class of operations that transform any QI state into a QI state~\cite{PhysRevLett.116.160407}, we consider the QI-preserving (QIP) operations.
As for the generalization of LICC and SI as a counterpart of QIP, we use MIO that preserves incoherent states on both of the two party.
We show that no inclusion relation holds between QIP and MIO on two parties, which contrasts with the fact that LQICC includes LICC and SQI includes SI\@.
The QIP is analogous to separability-preserving (SEPP) operations~\cite{5714245} in entanglement theory, which transforms any separable state into a separable state rather than QI states.
In entanglement theory, separable operations, which preserve the separability even if applied to a subsystem, are known to be a strict subclass of separability-preserving operations~\cite{Chitambar2017}.
In contrast, considering a seemingly smaller class of operations that is QI-preserving even if applied to a subsystem, which we define as completely QI-preserving (CQIP) operations, we prove that CQIP actually coincides with QIP exactly.
We introduce the new classes of operations shown in Fig.~\ref{fig:hierarchy} by taking intersection of different sets of operations.
These generalized classes of operations beyond LQICC, SQI, LICC, ans SI are advantageous in analyzing tasks because we can characterize some of them by SDP, as shown in Fig.~\ref{fig:hierarchy}, to provide SDP-based numerical algorithms for analyzing resource theories for distributed manipulation of coherence and entanglement.

Using these operations, we establish hierarchies (Fig.~\ref{fig:hierarchy}) of operations in the resource-theoretic framework of manipulating coherence and entanglement in the distributed settings.
In the entanglement theory, it has been a crucial question when the difference between LOCC, SEP, and PPT operations appears in the performance of achieving tasks, such as local state discrimination~\cite{B6,C4,6747300,B1,CH14,6687245} and entanglement manipulation~\cite{Rains2000,PhysRevA.94.050301,PhysRevLett.119.180506,8707001,Regula_2019,Wang2018,PhysRevLett.125.040502,Chitambar2017,PhysRevLett.121.190504}.
Also in the resource theory of coherence, the difference between MIO, IO, and other possible restricted operations has been investigated for tasks such as coherence manipulation~\cite{PhysRevLett.116.120404,PhysRevLett.121.010401,PhysRevA.97.050301,8863412}.
As for our case of distributed manipulation of coherence and entanglement, we identify tasks where differences of the operations in the hierarchy appear, leading to a proof of the strict separation of the newly introduced operations.

\begin{table*}
  \caption{\label{tab:result}Summary of definitions and examples of operations in resource theories for distributed manipulation of coherence and entanglement. The examples in the table lead to our results on the strict inclusion relations among the operations shown in Fig.~\ref{fig:hierarchy}.}
  \begin{tabular}{l@{\hspace{0.5cm}}p{4.5cm}@{\hspace{0.5cm}}p{8.5cm}}
    \toprule
    Set of operations & Definition & Examples that lead to our results on separation\\
    \midrule
    $\mathsf{QIP}$ & (Definition~\ref{def:mio}. See also~\eqref{eq:qi_state} for the definition of QI states shared between $A$ and $B$.) Operations (CPTP linear maps) on $A$ and $B$ that map any QI state shared between $A$ and $B$ to a QI state. & (Proposition~\ref{prp:fidelity}) An operation on $A$ and $B$ achieving the task of one-shot assisted distillation of coherence given in Proposition~\ref{prp:fidelity}.\\
    $\mathsf{QIP}\cap\mathsf{PPT}$ & Operations in the intersection of \textsf{QIP} and \textsf{PPT}. & (Proposition~\ref{prp:mio_ppt}) A map implemented by a PPT measurement with $B$ outputting a classical state representing the measurement outcome.\\
    $\mathsf{QIP}\cap\mathsf{SEP}$ & Operations in the intersection of \textsf{QIP} and \textsf{SEP}. & (Proposition~\ref{prp:sep}) MIO on $B$ with $A$ having a one-dimensional system.\\
    \textsf{SQI} & (Definition~\ref{def:sqi_si})  SEP where $B$'s Kraus operators satisfy the condition~\eqref{eq:io_kraus} of IO\@. &
    (Proof of Theorem~\ref{thm:inclusion}, Proposition~\ref{prp:qip_sep_mio_sep}) A map implemented by a separable measurement with $B$ outputting a classical state representing the measurement outcome.\\
    \textsf{LQICC} & (Definition~\ref{def:lqicc_licc})  LOCC with restriction on $B$'s operations to IO\@. & (Proof of Theorem~\ref{thm:inclusion}) A map implemented by an LOCC measurement with $B$ outputting a classical state representing the measurement outcome.

    (Proposition~\ref{prp:sep}) IO on $B$ with $A$ having a one-dimensional system.\\
    \midrule
    $\mathsf{MIO}$ & (Definition~\ref{def:mio}. See also~\eqref{eq:I} for the definition of incoherent states shared between $A$ and $B$.) Operations (CPTP linear maps) on $A$ and $B$ that map any incoherent state shared between $A$ and $B$ to an incoherent state. & (Proposition~\ref{prp:mio_ppt}) A SWAP unitary operation between $A$ and $B$.\\
    $\mathsf{MIO}\cap\mathsf{PPT}$ & Operations in the intersection of \textsf{MIO} and \textsf{PPT}. & (Proposition~\ref{prp:qip_sep_mio_sep}) A map implemented by a PPT measurement with both $A$ and $B$ outputting a classical state representing the measurement outcome.\\
    $\mathsf{MIO}\cap\mathsf{SEP}$ & Operations in the intersection of \textsf{MIO} and \textsf{SEP}. &
    (Proposition~\ref{prp:sep}) MIO on one of the parties $A$ and $B$ with the other party having a one-dimensional system.\\
    \textsf{SI} & (Definition~\ref{def:sqi_si}) SEP where both $A$ and $B$'s Kraus operators satisfy the condition~\eqref{eq:io_kraus} of IO\@. & (Proof of Theorem~\ref{thm:inclusion}, Proposition~\ref{prp:qip_sep_mio_sep}) A map implemented by a separable measurement with both $A$ and $B$ outputting a classical state representing the measurement outcome.\\
    \textsf{LICC} & (Definition~\ref{def:lqicc_licc}) LOCC with restriction on $A$ and $B$'s operations to IO\@. & (Proof of Theorem~\ref{thm:inclusion}) A map implemented by an LOCC measurement with both $A$ and $B$ outputting a classical state representing the measurement outcome.

    (Proposition~\ref{prp:sep}) IO on one of the parties $A$ and $B$ with the other party having a one-dimensional system.\\
    \bottomrule
  \end{tabular}
\end{table*}

In contrast with these separations of the operations in the hierarchies, we also demonstrate an asymptotically non-surviving separation of these operations in terms of performance of a task.
In the context of when the difference between different classes of operations arises,
the asymptotically non-surviving separation refers to a phenomena that the difference in the performance of a task in a one-shot scenario disappears in the corresponding asymptotic scenario.
An asymptotically non-surviving separation has been recently discovered in Ref.~\cite{Y1} in a communication task of quantum state merging~\cite{Horodecki2005,Horodecki2006,8590809}.
Reference~\cite{Y1} discloses this phenomena by proving the difference in the required amount of entanglement resources in achieving one-shot quantum state merging under one-way and two-way LOCC, while this difference ceases to exist in the conventional asymptotic quantum state merging.
In contrast with this communication task, we here consider another resource-theoretic task that is known as assisted distillation of coherence~\cite{PhysRevLett.116.070402,M1,R1}.
We demonstrate the difference in the maximal amount of distillable coherence with assistance under QIP and one-way LQICC in a one-shot scenario, using the SDP calculation that we establish in this paper, and at the same time prove the coincidence of them in the corresponding asymptotic scenario.
This demonstration finds an application of the SDP-based numerical technique that we introduce for investigating manipulation of coherence and entanglement, which by itself has wide applicability beyond the assisted distillation of coherence.

Consequently, our results establish essential analytical and numerical tools for investigating resource theories for manipulating coherence and entanglement in the distributed settings, by introducing different classes of operations and clarifying relative powers of the operations.
The significance of our results is that the operations that we prove to have different powers are useful for bounding performances of information processing tasks performed by the distributed manipulation of coherence and entanglement.
The results are also of fundamental importance in clarifying similarity and difference between the manipulation of entanglement in entanglement theory and the distributed manipulation of coherence in this paper. This will be useful for better understanding the interplay between coherence and entanglement from the operational viewpoint, based on resource theories under different operations.

The rest of this paper is organized as follows.
In Sec.~\ref{sec:preliminaries}, we recall operations that are known in resource theories of entanglement, coherence, and distributed manipulation of coherence and entanglement.
In Sec.~\ref{sec:qip}, we introduce QI-preserving operations and prove the equivalence between QI-preserving operations and completely QI-preserving operations.
In Sec.~\ref{sec:hierarchy}, we establish a hierarchy of operations beyond LQICC and SQI and that beyond LICC and SI, clarifying tasks that separate the power of these operations.
In Sec.~\ref{sec:non_surviving_separation}, we demonstrate the asymptotically non-surviving separation between LQICC and QI-preserving operations.
Our conclusion is given in Sec.~\ref{sec:conclusion}.

\section{\label{sec:preliminaries}Preliminaries}

As we are interested in manipulation of coherence and entanglement in distributed settings, the first question to ask is what are the allowed operations in such settings. The choices of allowed operations {for spatially separated parties $A$ and $B$, \textit{e.g.},}
local operations with classical communication ($\mathsf{LOCC}$), separable operations $(\mathsf{SEP})$ and positive partial transpose maps $(\mathsf{PPT})$, have been well explored in entanglement theory~\cite{H2}. Similarly there are several candidates for allowed operations that do not create coherence in a fixed reference basis~\cite{S3}. We will focus our attention on two prominent ones, namely, incoherent operations (\textsf{IO}) and maximally incoherent operations (\textsf{MIO}). We will now define these operations formally and discuss how they can be combined to investigate distributed manipulation of coherence and entanglement.

In this paper, we consider quantum systems represented as finite-dimensional Hilbert spaces, which are written as $\mathcal{H}^A\otimes\mathcal{H}^{A^\prime}\otimes\cdots$ on $A$ and $\mathcal{H}^B\otimes\mathcal{H}^{B^\prime}\otimes\cdots$ on $B$.
We fix an orthonormal basis on $A$ and $B$, which is called a reference basis and denoted by ${\{\Ket{i}^A\}}_{i=0,\ldots,D_A}$ of $\mathcal{H}^A$ and ${\{\Ket{j}^B\}}_{j=0,\ldots,D_B}$ of $\mathcal{H}^B$, where we write the dimension of $\mathcal{H}^A$ and $\mathcal{H}^B$ as $D_A$ and $D_B$, respectively.
We may use superscripts of bras and kets to represent the quantum system to which the bras and kets belong, such as $\Ket{\psi}^{AB}\in\mathcal{H}^A\otimes\mathcal{H}^B$.
Superscripts of operators and maps represent the quantum system on which they act, such as $\mathcal{E}^{AB}$ for a linear map of operators on $\mathcal{H}^A\otimes\mathcal{H}^B$ and $\mathcal{E}^{A\to A^\prime}$ for that from $\mathcal{H}^A$ to $\mathcal{H}^{A^\prime}$.
The identity operator is denoted by $\openone$, and the identity map is $\id$.

\subsection{Operations in resource theories of entanglement}

We review here definitions of operations that are important in the resource theory of entanglement~\cite{H2}.

We recall the definition of LOCC as follows.

\begin{definition}[\textsf{LOCC}]
  Local operations with classical communication or \textsf{LOCC} can be defined for a bipartite system as follows: if a completely positive and trace-preserving (CPTP) map $\mathcal E^{AB}$ has a Kraus decomposition using local operations of measurements by parties $A$ and $B$ along with classical communication of the measurement outcomes between them, then the map represents \textsf{LOCC}\@. The set of $r$-round \textsf{LOCC} maps, denoted as $r$-\textsf{LOCC}, consists of an $\mathsf{LOCC}$ map that can be implemented using $r$ rounds of classical communication, where the case of $r=0$ yields local operations (without classical communication). Note that it can be shown that $r$-\textsf{LOCC} is a strict subset of $(r + 1)$-\textsf{LOCC}~\cite{Chitambar2014}. {\textsf{LOCC} can be thought to be a \textit{physical} operation in the resource theory of entanglement in the sense that it describes all the allowed space-like separated operations that the two classically communicating parties can perform.} For a more rigorous definition, see Refs.~\cite{doi:10.1063/1.1495917,Chitambar2014,Y5}.
\end{definition}

Separable operations include LOCC as special cases.
While there can be separable operations that are not LOCC, separable operations are often used as a relaxation of LOCC since separable operations can be mathematically easier to characterize than LOCC itself.

\begin{definition}[\textsf{SEP}]
  A CPTP map $\mathcal{E}^{AB}$ represents separable operations if and only if there exists a Kraus decomposition of $\mathcal{E}^{AB}$ where Kraus operators are product operators, \textit{i.e.},
  \begin{align}
      &\mathcal{E}^{AB}\left(\rho^{AB}\right)=\sum_i \left(M_i^A\otimes K_i^B\right)\rho^{AB} {\left(M_i^A\otimes K_i^B\right)}^\dag,\\
      &\sum_i {\left(M_i^A\otimes K_i^B\right)}^{\dagger}\left(M_i^A\otimes K_i^B\right) = \openone^{AB},
  \end{align}
 where $\openone^{AB}$ is the identity operator on the full space of A and B.
 The set of separable operations is denoted by  \textsf{SEP}\@.
 The separable measurement refers to a measurement represented by the measurement operators ${\left\{M_i^A\otimes K_i^B\right\}}_i$ satisfying the above condition of the separable maps.
 The set \textsf{LOCC} is a strict subset of the set of \textsf{SEP} operations~\cite{Chitambar2014}.
\end{definition}

PPT operations include separable operations and LOCC as special cases.
PPT operations may generate PPT entangled states from separable states, while separable operations and LOCC do not.
However, PPT operations are useful for analyzing tasks since the condition~\eqref{eq:ppt_sdp} in the following can be used for numerical algorithms based on semidefinite programming, unlike separable operations and LOCC\@.

\begin{definition}[\textsf{PPT}]
  A CPTP map $\mathcal{E}^{AB}$ represents a \textsf{PPT} operation if and only if the map
  \begin{equation}
    \left(\id^A\otimes T^B\right)\circ\mathcal{E}^{AB}\circ\left(\id^A\otimes T^B\right)
  \end{equation}
  is completely positive, where $\id^A$ is the identity map on $A$, and $T^B$ is the transpose map on $B$ with respect to the reference basis.
  Or equivalently, a CPTP map $\mathcal{E}^{AB}$ is a \textsf{PPT} map if and only if the Choi operator ${J\left(\mathcal{E}\right)}^{A^\prime B^\prime A B}$~\cite{Watrous:2018:TQI:3240076} satisfies a condition~\cite{Rains2000}
  \begin{equation}
    \label{eq:ppt_sdp}
      {J\left(\mathcal{E}\right)}^{\mathrm{T}_{A^\prime A}}\geqq 0,
  \end{equation}
  where $\mathrm{T}_{A^\prime A}$ is the partial transpose on $A^\prime A$ with respect to the reference basis.
  A PPT measurement~\cite{PhysRevLett.109.020506,C4,6747300,B1} can be represented by a family of measurement operators ${\left\{K_i^{AB}\right\}}_i$ satisfying
  \begin{align}
      \sum_i K_i^\dag{}^{AB} K_i^{AB}&=\openone,\\
       {\left(K_i^\dag{}^{AB} K_i^{AB}\right)}^{\mathrm{T}_A}&\geqq 0,\,\forall i.
  \end{align}
  Note that for any PPT measurement given by  ${\left\{K_i^{AB}\right\}}_i$, 
  a CPTP map implemented by this measurement
  \begin{equation}
  \mathcal{E}^{AB}\left(\rho^{AB}\right)=\sum_i K_i^{AB}\rho^{AB} K_i^\dag{}^{AB}
  \end{equation}
  is a \textsf{PPT} map.
  The set \textsf{SEP} is a strict subset of \textsf{PPT}~\cite{Chitambar2014}.
\end{definition}

\subsection{Operations in resource theories of coherence}

Free operations in the resource theory of coherence are ones that map diagonal states to diagonal states in a particular reference basis~\cite{S3},
so that the diagonal states can be free states for the free operations in the resource theory of coherence.
Recall the reference bases  ${\{\Ket{i}^A\}}_i$ of $\mathcal{H}^A$ and ${\{\Ket{j}^B\}}_j$ of $\mathcal{H}^B$.
Then, the reference basis of a bipartite system $\mathcal{H}^A\otimes\mathcal{H}^B$ is the product reference bases of the subsystems.
The set of incoherent states of $\mathcal{H}^A$ is given by
\begin{equation}
  \left\{\sum_i p\left(i\right)\Ket{i}\Bra{i}^A\right\},
\end{equation}
where $p$ is a probability distribution, and the set of incoherent states of $\mathcal{H}^B$ can be given in the same way.
We recall that the class of incoherent operations \textsf{IO} refers to those which cannot create coherent states from incoherent states even with post-selection.
\begin{definition}[\textsf{IO}]
  A CPTP map $\mathcal{E}$ represents an incoherent operation if and only if there exists a Kraus decomposition satisfying
  \begin{equation}
    \label{eq:io}
    \begin{split}
      &\mathcal{E}\left(\rho\right)\coloneqq\sum_i K_i \rho K_i^\dag,\\
      &\forall i,\,\rho\in\mathcal{I}\Rightarrow \frac{K_i \rho K_i^\dag}{p\left(i\right)}\in\mathcal{I},\,p\left(i\right)\coloneqq \tr K_i \rho K_i^\dag,
    \end{split}
  \end{equation}
  where $\mathcal{I}$ denotes the set of incoherent states, and the trace-preserving condition yields $\sum_i K_i^\dag{}^B K_i^B=\openone$.
  Equivalently, these Kraus operators ${\left\{K_i^B\right\}}_i$ for incoherent operations can be written in the form of
  \begin{equation}
    \label{eq:io_kraus}
    K_i^B\coloneqq \sum_j c_i\left(j\right)\Ket{f_i\left(j\right)}\Bra{j}^B,\,\forall i,
  \end{equation}
  where $f_i$ is a function (which is not necessarily bijective), and $c_i\left(j\right)\in\mathbb{C}$ for each $j$~\cite{PhysRevLett.113.140401}.
  The set of incoherent operations is denoted by \textsf{IO}.
\end{definition}

While \textsf{IO} clearly preserves diagonal states in the reference basis, it is not the maximal set of CPTP maps that do so.
In a quantum resource theory, the maximal set of operations refers to the largest possible set of operations that do not generate any resource from free states~\cite{RevModPhys.91.025001}.
We now define this maximal set, {that is, maximally incoherent operations \textsf{MIO}.
Notice that unlike \textsf{IO}, \textsf{MIO} may create coherence from an incoherent state probabilistically by post-selecting a measurement outcome corresponding to a Kraus operator that implements \textsf{MIO}.}

\begin{definition}[\label{def:mio}\textsf{MIO}]
  A CPTP map $\mathcal{E}$ represents a maximally incoherent operation (\textsf{MIO}) if and only if $\mathcal{E}$ maps any incoherent state into an incoherent state deterministically~\cite{Aberg2006}, that is, for any $\rho\in\mathcal{I}$,
  \begin{equation}
    \mathcal{E}\left(\rho\right)\in\mathcal{I},
  \end{equation}
  where $\mathcal{I}$ denotes the set of incoherent states.
\end{definition}

\subsection{Operations in resource theories of distributed manipulation of coherence and entanglement}

In the two-party distributed settings for manipulating coherence and entanglement, we can look at two possible settings, namely, (i) a quantum-incoherent setting where one party is restricted to use incoherent operations, and (ii) an incoherent-incoherent setting where both parties are restricted to incoherent operations,
as depicted in Fig.~\ref{fig:introduction}.
We now recall the extensions of \textsf{LOCC} to the quantum-incoherent and incoherent-incoherent settings.

\begin{definition}[\label{def:lqicc_licc}\textsf{LQICC} and \textsf{LICC}]
  The set of \textsf{LQICC}~\cite{S2} is defined in the same way as \textsf{LOCC}, with restriction on $B$'s operations to incoherent operations. The set of \textsf{LICC}~\cite{S2} is also defined in the same way as \textsf{LOCC}, with restriction on both $A$ and $B$'s operations to incoherent operations. Corresponding to the $r$-round LOCC, we define $r$-\textsf{LQICC} and $r$-\textsf{LICC} by replacing \textsf{LOCC} with $r$-\textsf{LOCC} in the above definitions, respectively.
  One-way LQICC refers to operations represented as CPTP maps by $A$ and $B$ consisting of $A$'s (arbitrary quantum) measurement represented by a quantum instrument ${\left\{\mathcal{E}_i^A\right\}}_i$ with the outcome $i$, followed by $B$'s incoherent operation $\tilde{\mathcal{E}}_i^{B}\in\textsf{IO}$ conditioned on $i$,
  that is, CPTP maps in the form of
  \begin{equation}
    \label{eq:lqicc}
    \mathcal{E}^{\textup{1-LQICC}}\coloneqq\sum_i\mathcal{E}_i^A\otimes\tilde{\mathcal{E}}_i^{B},
  \end{equation}
  which we may also write as $1$-\textsf{LQICC} because one-way LQICC is composed of one round of classical communication.
  One-way LICC is defined in a similar way to the the definition of one-way LQICC using incoherent operations on both $A$ and $B$.
  Note that we always consider one-way classical communication from $A$ to $B$ in one-way LQICC\@.
\end{definition}

The set of free states for \textsf{LQICC} in the quantum-incoherent setting are quantum-incoherent (QI) states~\cite{PhysRevLett.116.070402}.
A bipartite state $\rho^{AB}$ is called a QI state if and only if $\rho^{AB}$ is in the form of
\begin{equation}
  \label{eq:qi_state}
  \rho^{AB}=\sum_j p\left(j\right)\rho_j^A\otimes{\Ket{j}\Bra{j}}^B,
\end{equation}
where $\{\Ket{j}^B\}$ is the incoherent basis of $\mathcal{H}^B$, $p\left(j\right)$ is a probability distribution, and $\rho_j^A\in\mathcal{D}\left(\mathcal{H}^A\right)$.
The set of QI states is denoted by $\mathcal{QI}$.
Note that QI states are different from, but a special case of, quantum-classical (QC) states, \textit{i.e.}, states that have zero quantum discord~\cite{RevModPhys.84.1655}, since the reference basis of $B$'s diagonal part is fixed for QI states but not for QC states.
The set $\mathcal{I}$ of incoherent states of a bipartite quantum system $AB$ refers to that of states which are incoherent on both $A$ and $B$, \textit{i.e.}
\begin{equation}
\label{eq:I}
\mathcal{I}=\left\{\sum_{i,j} p\left(i,j\right)\Ket{i}\Bra{i}^A\otimes\Ket{j}\Bra{j}^B\right\},
\end{equation}
where $p\left(i,j\right)$ is a probability distribution.
These are the free states for \textsf{LICC} in the incoherent-incoherent setting.

We now define a larger set of operations by considering separable operations in conjunction with incoherent operations.

\begin{definition}[\label{def:sqi_si}\textsf{SQI} and \textsf{SI}]
In analogy to separable maps, define SQI~\cite{S2} to be a class of CPTP maps by $A$ and $B$ as
\begin{equation}
  \label{eq:qisep}
  \mathcal{E}^{\textup{SQI}}\left(\rho^{AB}\right)\coloneqq\sum_i \left(M_i^A\otimes K_i^{B}\right)\rho^{AB}{\left(M_i^A\otimes K_i^{B}\right)}^\dag,
\end{equation}
where ${\left\{M_i^A\otimes K_i^B\right\}}_i$ is a family of Kraus operators satisfying
\begin{equation}
  \sum_i {\left(M_i^A\otimes K_i^B\right)}^\dag\left(M_i^A\otimes K_i^B\right)=\openone,
\end{equation}
and for each $i$, $K_i^B$ satisfies the condition~\eqref{eq:io_kraus} of incoherent operations.
The class \textsf{SI}~\cite{S2} is defined in the same way as \textsf{SQI} where not only $K_i^B$ but both $M_i^A$ and $K_i^B$ satisfy the condition~\eqref{eq:io_kraus} of incoherent operations.
\end{definition}

While these operations combine operations in the resource theory of entanglement and that of coherence, they are not necessarily equivalent to the intersection of the sets of two different operations in these resource theories.
We remark the subtleties of taking the intersection of two operations in the following.

\begin{remark}
  [\label{remark:op_int}Difference between the set intersection of CPTP maps and the operational intersection]
  It is not straightforward to consider intersection of two different sets of operations because of the fact that a CPTP map may have different Kraus operators ${\left\{M_i\right\}}_i$ and ${\left\{K_j\right\}}_j$ that are related by a unitary operator $u={\left(u_{i,j}\right)}_{i,j}$ as~\cite{Watrous:2018:TQI:3240076}
  \begin{equation}
    K_j=\sum_i u_{i,j}M_i.
  \end{equation}
  For example, consider a CPTP map
  \begin{align}
    &\mathcal{E}^{AB}\left(\rho^{AB}\right)\coloneqq\sum_{i=0}^{1}M_i\rho M_i^\dag,
  \end{align}
  where
  \begin{align}
    &M_0\coloneqq\Ket{+}\Bra{+}^A\otimes\openone^B,\\
    &M_1\coloneqq\Ket{-}\Bra{-}^A\otimes Z^B,
  \end{align}
  which is a one-way LOCC map that can be implemented by $A$'s measurement represented by Kraus operators as $\left\{\Ket{+}\Bra{+},\Ket{-}\Bra{-}\right\}$, followed by $B$'s correction $\openone^B$ or $Z^B$ conditioned on $A$'s outcome.
  At the same time, this CPTP map $\mathcal{E}^{AB}$ is in \textsf{IO} on $AB$ because $\mathcal{E}^{AB}$ can also be regarded using a different set of Kraus operators as
  \begin{align}
    &\mathcal{E}^{AB}\left(\rho^{AB}\right)\coloneqq\sum_{i=0}^{1}K_i\rho K_i^\dag,\\
    &K_0\coloneqq
    \frac{\openone^A\otimes\Ket{0}\Bra{0}^B+X^A\otimes\Ket{1}\Bra{1}^B}{\sqrt{2}},\\
    &K_1\coloneqq
    \frac{X^A\otimes\Ket{0}\Bra{0}^B+\openone^A\otimes\Ket{1}\Bra{1}^B}{\sqrt{2}},
  \end{align}
  which is an incoherent map that can be implemented by a \textit{nonlocal} incoherent measurement on $AB$.
  However, it is unclear whether $\mathcal{E}$ is in \textsf{LICC} in the sense that there exists an implementation by \textit{local} incoherent operations and classical communication.
  It is not straightforward to present Kraus operators of $\mathcal{E}$ that simultaneously satisfy the conditions of \textsf{LOCC} and \textsf{IO}.
  We propose that a sensible way to resolve this is to consider a subset of the intersection of two sets of CPTP maps, so that CPTP maps in this subset can be implemented by Kraus operators satisfying conditions of both of the original two simultaneously; \textit{e.g.}, for \textsf{LOCC} and \textsf{IO}, the Kraus operators are \textit{both} local and incoherent, which we name the \textit{operational intersection} of \textsf{LOCC} and \textsf{IO}.
\end{remark}

\section{\label{sec:qip}QI-preserving maps}

In this section, we investigate a class of operations that preserve QI states, which we name QI-preserving operations, or QIP for short. This QIP is a generalization of LQICC and SQI as we will investigate further in the next section.
As for a generalization of LICC and SI, we use MIO that preserves incoherent state shared between $A$ and $B$.
We show that no inclusion relation holds between QIP and MIO\@.

The QIP is analogous to the separability-preserving operations (SEPP)~\cite{5714245} in the entanglement theory, \textit{i.e.}, operations that preserve separable states.
It is known that SEPP is different from separable (SEP) operations, because the SWAP operation on a bipartite system preserves separability of the bipartite state, but the SWAP is not separable~\cite{Chitambar2017}.
In the context of the separability-preserving property, SEP is characterized as a strict subclass of separability-preserving operations because SEP preserves separability even if applied to a subsystem; that is,
a CPTP map $\mathcal{E}^{AB}$ is separable if and only if for any identity map $\id^{A^\prime B^\prime}$ on a bipartite auxiliary system $\mathcal{H}^{A^\prime}\otimes\mathcal{H}^{B^\prime}$, $\id^{A^\prime B^\prime}\otimes\mathcal{E}^{AB}$ is separability-preserving.
Similarly, a completely positive map is also characterized as a special subclass of positive maps that preserve positivity even if applied to a subsystem~\cite{Watrous:2018:TQI:3240076}.
Based on these observations, we can consider a seemingly special class of QIP that preserves QI states even if applied to a subsystem, which we call completely QI-preserving (CQIP) operations.
However, we show that QIP and CQIP are the same set of operations, which leads to an essential difference between the entanglement theory and the resource theory of distributed manipulation of coherence and entanglement.
In addition, the techniques for showing the equivalence of QIP and CQIP yield a characterization of QIP that can be used for numerical algorithms based on semidefinite programming (SDP).

We define QI-preserving operations as a class of bipartite operations that transforms any QI state into a QI state, which generalizes LQICC and SQI over the two parties\@.
Note that this class of operations is first considered in Ref.~\cite{PhysRevLett.116.160407} in the context of studying discord and coherence, but our key contributions in this paper are to clarify the relation between QIP and the other possible classes of operations in the resource theory of manipulating coherence and entanglement, and to provide the SDP-based characterization of QIP\@.
The formal definition of QIP is as follows.

\begin{definition}[\label{def:qip}\textsf{QIP}]
A CPTP map $\mathcal{E}^{AB}$ is said to be QI-preserving if and only if for any QI state $\rho^{AB}\in\mathcal{QI}$,
\begin{equation}
  \label{eq:qi_preserving}
  \mathcal{E}^{AB}\left(\rho^{AB}\right)\in\mathcal{QI}.
\end{equation}
The set of QI-preserving maps is denoted by $\mathsf{QIP}$.
\end{definition}

We can regard MIO over the two parties as a generalization of LICC and SI\@.
In particular, MIO in the following refers to the class of operations in Definition~\ref{def:mio} with the set of incoherent states $\mathcal{I}$ given by~\eqref{eq:I}, which is incoherent on both $A$ and $B$.
We have, by definition, an inclusion relation between LQICC and LICC, and the same inclusion between SQI and SI, that is,
\begin{align}
\label{eq:inclusion_relation}
  \mathsf{LICC}&\subsetneqq\mathsf{LQICC},\nonumber\\  
  \mathsf{SI}&\subsetneqq\mathsf{SQI}.
\end{align}
In contrast, QIP and MIO do not have this inclusion relation as we show in the following, which illuminates the difference between QIP and these previously known classes of operations.

\begin{proposition}
It holds that
\begin{align}
\label{eq:qpi_not_subset_mio}
\mathsf{QIP}&\not\subset\mathsf{MIO},\\
\label{eq:mio_not_subset_qip}
\mathsf{MIO}&\not\subset\mathsf{QIP}.
\end{align}
\end{proposition}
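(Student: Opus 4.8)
The plan is to prove each of the two non-inclusions by exhibiting a single explicit witness map, and the observation that makes suitable witnesses easy to find is the structural asymmetry between the two classes. A $\mathsf{QIP}$ map only has to keep the $B$-marginal diagonal in the reference basis, while imposing \emph{no} constraint at all on the $A$-side; in particular, since $\mathcal{I}\subset\mathcal{QI}$ but $\mathcal{QI}\not\subset\mathcal{I}$, a $\mathsf{QIP}$ map is free to send an incoherent state to a non-incoherent QI state. Dually, an $\mathsf{MIO}$ map must keep \emph{both} marginals diagonal, but is indifferent to how the ``classical'' degrees of freedom are distributed between $A$ and $B$, so it may move coherence from the $B$-side to the $A$-side.

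For~\eqref{eq:qpi_not_subset_mio} I would take a coherence-generating local unitary on $A$: let $\mathcal{E}_1^{AB}$ be the channel that conjugates by $H^A\otimes\openone^B$, where $H^A$ is the Hadamard acting on a two-dimensional subspace $\mathrm{span}\{\Ket{0},\Ket{1}\}$ of $\mathcal{H}^A$ and as the identity elsewhere. To see $\mathcal{E}_1\in\mathsf{QIP}$: any $\rho^{AB}=\sum_j p(j)\,\rho_j^A\otimes\op{j}{j}^B\in\mathcal{QI}$ is mapped to $\sum_j p(j)\,(H^A\rho_j^A H^A)\otimes\op{j}{j}^B$, which is again of QI form because $H^A\rho_j^A H^A$ is a density operator on $A$ and the $B$-part is untouched. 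To see $\mathcal{E}_1\notin\mathsf{MIO}$: the incoherent state $\op{0}{0}^A\otimes\op{0}{0}^B\in\mathcal{I}$ is mapped to $\op{+}{+}^A\otimes\op{0}{0}^B$, which is not diagonal on $A$ and hence not in $\mathcal{I}$.

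For~\eqref{eq:mio_not_subset_qip} I would swap a qubit subspace of $A$ with a qubit subspace of $B$: let $\mathcal{E}_2^{AB}$ be conjugation by the unitary that exchanges $\mathrm{span}\{\Ket{0},\Ket{1}\}^A$ with $\mathrm{span}\{\Ket{0},\Ket{1}\}^B$ and acts as the identity on the orthogonal complement (when $D_A=D_B$ one may simply use the full $\mathrm{SWAP}^{AB}$). To see $\mathcal{E}_2\in\mathsf{MIO}$: it permutes the reference-basis product states $\op{i}{i}^A\otimes\op{j}{j}^B$ among themselves, hence preserves $\mathcal{I}$. To see $\mathcal{E}_2\notin\mathsf{QIP}$: the QI state $\op{+}{+}^A\otimes\op{0}{0}^B$ is sent to $\op{0}{0}^A\otimes\op{+}{+}^B$, which is not diagonal on $B$ and therefore not in $\mathcal{QI}$. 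Since each direction comes down to checking two elementary membership claims, I do not anticipate a real obstacle; the only point needing a careful line is confirming, in the general-dimension case, that restricting the swap to fixed qubit subspaces still defines a global unitary that maps the reference basis onto itself.
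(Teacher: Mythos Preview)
Your argument for~\eqref{eq:qpi_not_subset_mio} is essentially the paper's: a local coherence-generating map on $A$ (the paper simply says ``a QI-preserving map can transform $\Ket{0}^A\otimes\Ket{0}^B$ into $\Ket{+}^A\otimes\Ket{0}^B$''; you make the witness explicit as the Hadamard). Nothing to add there.

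For~\eqref{eq:mio_not_subset_qip} your route is genuinely different. The paper constructs a two-outcome nonlocal \textsf{IO} with Kraus operators $K_0=\op{0}{0}^A\otimes\op{0}{0}^B+\op{1}{1}^A\otimes\op{1}{0}^B$ and $K_1=\op{0}{0}^A\otimes\op{0}{1}^B+\op{1}{1}^A\otimes\op{1}{1}^B$, checks that these satisfy the \textsf{IO} condition, and shows that the QI input $\op{+}{+}^A\otimes\op{0}{0}^B$ is sent to the maximally entangled state $\op{\Phi^+}{\Phi^+}^{AB}$. Your witness is the (partial) $\mathrm{SWAP}$, which is simpler: it is a single unitary, its \textsf{MIO} membership is immediate because it permutes the product reference basis, and the same QI input is mapped to $\op{0}{0}^A\otimes\op{+}{+}^B\notin\mathcal{QI}$. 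The paper in fact acknowledges elsewhere (in the proof of the hierarchy theorem and of the $\mathsf{MIO}\cap\mathsf{PPT}\subsetneqq\mathsf{MIO}$ separation) that $\mathrm{SWAP}\in\mathsf{MIO}\setminus\mathsf{QIP}$, so your choice is entirely legitimate. What the paper's witness buys in addition is the stronger qualitative point that an \textsf{IO} map can take a separable QI state all the way to a maximally \emph{entangled} state, not merely move coherence across the cut; your $\mathrm{SWAP}$ keeps the output separable. For the bare non-inclusion statement, though, your argument is cleaner and your caveat about carefully defining the partial swap in mismatched dimensions is the only thing to spell out (and it is routine: act as $\mathrm{SWAP}$ on $\mathrm{span}\{\Ket{0},\Ket{1}\}^A\otimes\mathrm{span}\{\Ket{0},\Ket{1}\}^B$ and as the identity on its orthogonal complement, which manifestly permutes the product basis $\{\Ket{i}^A\otimes\Ket{j}^B\}$).
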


\begin{proof}
We prove~\eqref{eq:qpi_not_subset_mio} and~\eqref{eq:mio_not_subset_qip} by showing instances.

We have~\eqref{eq:qpi_not_subset_mio} by definition. In particular, a QI-preserving map can transform  an incoherent state $\Ket{0}^A\otimes\Ket{0}^B$ into a QI state $\Ket{+}^A\otimes\Ket{0}^B$, which is not incoherent on $A$, and hence, this map is not maximally incoherent in the sense that an incoherent state on the two parties is transformed into a state that is not in $\mathcal{I}$ defined as~\eqref{eq:I}.

To show~\eqref{eq:mio_not_subset_qip}, we construct a nonlocal IO that is not QI-preserving, while IO is included in MIO by definition.
Consider the following IO
\begin{align}
  \mathcal{E}^{AB}\left(\rho^{AB}\right)&=\sum_{i=0}^{1}K_i^{AB}\rho^{AB}K_i^\dag{}^{AB},
\end{align}
where
\begin{align}
    K_0^{AB}&=\Ket{0}\Bra{0}^A\otimes\Ket{0}\Bra{0}^B+\Ket{1}\Bra{1}^A\otimes\Ket{1}\Bra{0}^B,\\
    K_1^{AB}&=\Ket{0}\Bra{0}^A\otimes\Ket{0}\Bra{1}^B+\Ket{1}\Bra{1}^A\otimes\Ket{1}\Bra{1}^B.
\end{align}
This is IO since $\Ket{0}^A\otimes\Ket{0}^B$, $\Ket{0}^A\otimes\Ket{1}^B$, $\Ket{1}^A\otimes\Ket{0}^B$, and $\Ket{1}^A\otimes\Ket{1}^B$ are transformed into incoherent states by these Kraus operators $\left\{K_0,K_1\right\}$.
However, inputting a QI state $\Ket{+}^A\otimes\Ket{0}^B$ to this IO\@, we obtain an entangled state, that is, a non-QI state
\begin{equation}
    \mathcal{E}^{AB}\left(\Ket{+}\Bra{+}^A\otimes\Ket{0}\Bra{0}^B\right)=\Ket{\Phi^+}\Bra{\Phi^+}^{AB},
\end{equation}
where
\begin{equation}
    \Ket{\Phi^+}=\frac{1}{\sqrt{2}}\left(\Ket{0}^A\otimes\Ket{0}^B+\Ket{1}^A\otimes\Ket{1}^B\right).
\end{equation}
Thus, $\mathcal{E}^{AB}$ is not QI-preserving.
\end{proof}

We also define completely QI-preserving operations as a class of operations over two parties that is QI-preserving even if the operation is applied to subsystems shared between the two parties.
The formal definition is as follows.

\begin{definition}[\label{eq:def_cqip}\textsf{CQIP}]
A CPTP map $\mathcal{E}^{AB}$ is said to be completely QI-preserving if and only if for an identity map $\id^{A'B'}$ on any shared auxiliary system $\mathcal{H}^{A'}\otimes\mathcal{H}^{B'}$, the map $\id^{A'B'}\otimes\mathcal{E}^{AB}$ is QI-preserving. The set of completely QI-preserving maps is denoted by $\mathsf{CQIP}$.
\end{definition}

To show the equivalence of \textsf{QIP} and \textsf{CQIP},
we characterize QI-preserving operations using a finite set of equations that can also be used for numerical algorithms based on SDP\@.
Define a density operator of a $D_A$-dimensional system $A$ for any $a,b\in\left\{0,\ldots,D_A-1\right\}$
\begin{equation}
  \label{eq:basis_state}
  \rho_{a,b}^A\coloneqq \begin{cases}
    \Ket{a}\Bra{a}^A&\text{if }a=b,\\
    \frac{1}{2}{\left(\Ket{a}+\Ket{b}\right)\left(\Bra{a}+\Bra{b}\right)}^A&\text{if }a<b,\\
    \frac{1}{2}{\left(\Ket{a}+\mathrm{i}\Ket{b}\right)\left(\Bra{a}-\mathrm{i}\Bra{b}\right)}^A&\text{if }a>b.
  \end{cases}
\end{equation}
These density operators ${\left\{\rho_{a,b}^A\right\}}_{a,b}$ serve as a basis spanning the set of all the operators of this $D_A$-dimensional system $A$~\cite{Watrous:2018:TQI:3240076}.
Then, we show the following characterization of QI-preserving maps.
Note that it is straightforward to use the condition~\eqref{eq:qi_preserving_sdp} in the following proposition as a linear constraint of Choi operators of QI-preserving operations in SDP, as we will demonstrate in the proof of Proposition~\ref{prp:fidelity}.

\begin{proposition}[\label{prp:qi_preserving}Characterization of QI-preserving operations]
  A CPTP map $\mathcal{E}^{AB}$ is QI-preserving if and only if
  for any $a,b\in\left\{0,\ldots,D_A-1\right\}$ and any $j\in\left\{0,\ldots,D_B-1\right\}$,
  it holds that
  \begin{equation}
    \label{eq:qi_preserving_sdp}
    \mathcal{E}^{AB}\left(\rho_{a,b}^A\otimes\Ket{j}\Bra{j}^B\right)\in\mathcal{QI},
  \end{equation}
  where $\rho_{a,b}$ is defined as~\eqref{eq:basis_state}.
\end{proposition}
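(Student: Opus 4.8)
The plan is to establish the two directions separately, the forward one being immediate and the converse carrying the real content. For the "only if" direction: each input $\rho_{a,b}^A\otimes\Ket{j}\Bra{j}^B$ is itself a QI state, since it has the form $\sigma^A\otimes\Ket{j}\Bra{j}^B$ with $\sigma^A=\rho_{a,b}^A\in\mathcal{D}\left(\mathcal{H}^A\right)$ and with $B$ diagonal in the reference basis; hence if $\mathcal{E}^{AB}\in\mathsf{QIP}$, then~\eqref{eq:qi_preserving_sdp} holds by definition of $\mathsf{QIP}$.

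For the "if" direction, assume~\eqref{eq:qi_preserving_sdp} holds for all $a,b,j$, and take an arbitrary QI state $\rho^{AB}=\sum_j p\left(j\right)\rho_j^A\otimes\Ket{j}\Bra{j}^B$. First I would use that ${\left\{\rho_{a,b}^A\right\}}_{a,b}$ spans the full operator space of $A$ (as quoted from Ref.~\cite{Watrous:2018:TQI:3240076}) to expand each $\rho_j^A=\sum_{a,b}c_{a,b}^{\left(j\right)}\rho_{a,b}^A$ with complex coefficients, and then invoke linearity of $\mathcal{E}^{AB}$ to write
\begin{equation}
  \mathcal{E}^{AB}\!\left(\rho^{AB}\right)=\sum_{j}\sum_{a,b}p\left(j\right)c_{a,b}^{\left(j\right)}\;\mathcal{E}^{AB}\!\left(\rho_{a,b}^A\otimes\Ket{j}\Bra{j}^B\right).
\end{equation}
By hypothesis every term on the right lies in $\mathcal{QI}$, hence in the linear span of $\mathcal{QI}$, which is exactly the subspace of operators that are block-diagonal with respect to $B$'s reference basis, \textit{i.e.}, of the form $\sum_k X_k^A\otimes\Ket{k}\Bra{k}^B$ with arbitrary $X_k^A$. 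Since this subspace is closed under arbitrary linear combinations, $\mathcal{E}^{AB}\left(\rho^{AB}\right)$ itself has this block-diagonal form.

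The final step upgrades "block-diagonal" to "genuinely QI": because $\mathcal{E}^{AB}$ is CPTP and $\rho^{AB}$ is a state, $\mathcal{E}^{AB}\left(\rho^{AB}\right)$ is a density operator, so conjugating the block-diagonal form by $\openone^A\otimes\Ket{k}\Bra{k}^B$ forces each block $X_k^A$ to be positive semidefinite, and the trace condition makes ${\left\{\tr X_k^A\right\}}_k$ a probability distribution; setting $q\left(k\right):=\tr X_k^A$ and $\tau_k^A:=X_k^A/q\left(k\right)$ for $q\left(k\right)>0$ then exhibits $\mathcal{E}^{AB}\left(\rho^{AB}\right)\in\mathcal{QI}$. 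The one point that needs care — and the only place the argument could go wrong if done carelessly — is that one cannot simply write the generic $\rho_j^A$ as a \emph{convex} combination of the $\rho_{a,b}^A$, since the $c_{a,b}^{\left(j\right)}$ are not probabilities; the resolution is precisely to pass to the linear span of $\mathcal{QI}$ (the $B$-block-diagonal subspace), which absorbs the signs and phases of these coefficients, and to reinstate positivity and normalization only at the very end via the CPTP property. Everything else is routine linear algebra.
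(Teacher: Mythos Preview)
Your proof is correct and follows the same approach as the paper: expand the $A$-part of an arbitrary QI state in the operator basis $\{\rho_{a,b}^A\}$, push the expansion through $\mathcal{E}^{AB}$ by linearity, and conclude. Your handling of the final step is in fact more careful than the paper's, which simply asserts the inclusion ``by convexity of $\mathcal{QI}$'' even though the expansion coefficients $c_{a,b}^{(j)}$ are not probabilities; your passage through the $B$-block-diagonal linear span followed by recovery of positivity and normalization via the CPTP property is exactly the rigorous justification that the paper's shortcut tacitly relies on.
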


\begin{proof}
  Any QI-preserving map satisfies the condition~\eqref{eq:qi_preserving} by definition, and hence it suffices to show the converse; that is, we prove that any CPTP map satisfying~\eqref{eq:qi_preserving} transforms any QI state into a QI state.
  Consider any QI state
  \begin{equation}
    \label{eq:rho_ab_qip}
    \rho^{AB}=\sum_j p\left(j\right)\rho_j^A\otimes\Ket{j}\Bra{j}^B.
  \end{equation}
  Using density operators~\eqref{eq:basis_state} that serve as a basis of $\mathcal{H}^A$, we can write $\rho_j^A$ for each $j$ in~\eqref{eq:rho_ab_qip} as
  \begin{equation}
    \rho_j^A=\sum_{a,b}c_{a,b}^{\left(j\right)}\rho_{a,b}^A,
  \end{equation}
  where $c_{a,b}^{\left(j\right)}\in\mathbb{C}$ for each $a,b$ is a coefficient in this linear expansion of $\rho_j^A$.
  Then, due to the linearity of $\mathcal{E}^{AB}$, we have for each $j$
  \begin{align}
    &\mathcal{E}^{AB}\left(\rho_j^A\otimes\Ket{j}\Bra{j}^B\right)\\
    &=\sum_{a,b}c_{a,b}^{\left(j\right)}\mathcal{E}^{AB}\left(\rho_{a,b}^A\otimes\Ket{j}\Bra{j}^B\right)\in\mathcal{QI},
  \end{align}
  where the last inclusion follows from the convexity of the set of QI states $\mathcal{QI}$.
  Using the convexity in the same way, for any QI state $\rho^{AB}$ in the form of~\eqref{eq:rho_ab_qip}, we obtain
  \begin{align}
    &\mathcal{E}^{AB}\left(\rho^{AB}\right)\\
    &=\sum_{a,b,j}p\left(j\right)c_{a,b}^{\left(j\right)}\mathcal{E}^{AB}\left(\rho_{a,b}^A\otimes\Ket{j}\Bra{j}^B\right)\in\mathcal{QI},
  \end{align}
  which yields the conclusion.
\end{proof}

As for completely QI-preserving operations, we also show a characterization of completely QI-preserving operations similarly to Proposition~\ref{prp:qi_preserving} on QIP\@.

\begin{proposition}[\label{prp:cqip}Characterization of completely QI-preserving operations]
  A CPTP map $\mathcal{E}^{AB}$ is completely QI-preserving if and only if
  for any $j\in\left\{0,\ldots,D_B-1\right\}$,
  it holds that
  \begin{equation}
    \label{eq:cqip}
    \left(\id^{A^{\prime\prime}}\otimes\mathcal{E}^{AB}\right)\left(\Ket{\Phi_{D_A}}\Bra{\Phi_{D_A}}^{A^{\prime\prime} A}\otimes\Ket{j}\Bra{j}^B\right)\in\mathcal{QI},
  \end{equation}
  where $\mathcal{QI}$ in this proposition means the set of QI states that are quantum on $A^{\prime\prime} A$ and incoherent on $B$, and $\Ket{\Phi_{D_A}}$ is a maximally entangled state of Schmidt rank $D_A$
  \begin{equation}
    \Ket{\Phi_{D_A}}^{A^{\prime\prime} A}\coloneqq\frac{1}{\sqrt{D_A}}\sum_{d=0}^{{D_A}-1}\Ket{d}^{A^{\prime\prime}}\otimes\Ket{d}^A.
  \end{equation}.
\end{proposition}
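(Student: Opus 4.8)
The plan is to mimic the structure of the proof of Proposition~\ref{prp:qi_preserving}, but now exploiting the Choi–Jamiołkowski isomorphism to reduce the ``complete'' requirement to a single test state per value of $j$. The ``only if'' direction is immediate: if $\mathcal{E}^{AB}$ is completely QI-preserving, then in particular $\id^{A^{\prime\prime}}\otimes\mathcal{E}^{AB}$ maps the QI state $\op{\Phi_{D_A}}{\Phi_{D_A}}^{A^{\prime\prime}A}\otimes\op{j}{j}^B$ (which is QI because its $B$-part is the incoherent state $\op{j}{j}^B$ while the $A^{\prime\prime}A$-part is an arbitrary quantum state) into a QI state, which is exactly~\eqref{eq:cqip}. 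So the content is the converse.

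For the converse, suppose~\eqref{eq:cqip} holds for every $j$. I want to show $\id^{A'B'}\otimes\mathcal{E}^{AB}$ is QI-preserving for an arbitrary auxiliary system $\mathcal{H}^{A'}\otimes\mathcal{H}^{B'}$. First I would reduce to the case $B'$ trivial: a general QI state on $A'B'AB$ has the form $\sum_{j,j'}p(j,j')\,\sigma_{j,j'}^{A'A}\otimes\op{j'}{j'}^{B'}\otimes\op{j}{j}^{B}$, and by linearity and convexity of $\mathcal{QI}$ it suffices to handle each term $\sigma^{A'A}\otimes\op{j}{j}^{B}$ separately, i.e.\ to show that $\id^{A'}\otimes\mathcal{E}^{AB}$ sends $\sigma^{A'A}\otimes\op{j}{j}^{B}$ into a state that is quantum on $A'A$ and incoherent on $B$, for every density operator $\sigma^{A'A}$ and every $j$. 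Next, fix $j$ and consider the map $\Lambda_j\colon X^A\mapsto \mathcal{E}^{AB}(X^A\otimes\op{j}{j}^B)$; the hypothesis~\eqref{eq:cqip} says precisely that the Choi operator $(\id^{A''}\otimes\Lambda_j)(\op{\Phi_{D_A}}{\Phi_{D_A}}^{A''A})$, which equals $\frac{1}{D_A}J(\Lambda_j)$, is QI (quantum on $A''A$, incoherent on $B$). The standard identity $(\id\otimes\Lambda_j)(\sigma^{A'A}) = D_A\,\tr_{A''}\!\big[(\sigma^{T})^{A'A''}\otimes\openone\,(\id^{A''}\otimes\Lambda_j)(\op{\Phi_{D_A}}{\Phi_{D_A}}^{A''A})\big]$ (contracting one leg of the Choi operator against the transpose of the input) then expresses the output as a partial trace of a positive operator of the form ``(something on $A'A''$)$\,\otimes\,$(QI state on $A''AB$).'' The partial trace over $A''$ keeps the $B$-part diagonal, and the result is automatically a valid (quantum on $A'A$, incoherent on $B$) state because the whole construction is CP and trace-preserving on that slice; hence it lies in $\mathcal{QI}$.

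The step I expect to be the main obstacle is verifying carefully that the Choi-contraction identity, applied here, does \emph{not} disturb the incoherence on $B$: one must check that tracing out $A''$ and pairing against an operator supported only on $A'A''$ acts as the identity on the $B$ register, so that the $B$-diagonal structure guaranteed by~\eqref{eq:cqip} is transported to the output unchanged — and simultaneously that the $A'A$ part remains an unconstrained (``quantum'') density operator, so no spurious basis-dependence is introduced on $A'$ or $A$. A clean way to handle this is to write the QI output of~\eqref{eq:cqip} explicitly as $\sum_j' q(j')\,\omega_{j'}^{A''A}\otimes\op{j'}{j'}^B$, push it through the contraction term by term, and observe that each term yields $\big(\text{density operator on }A'A\big)\otimes\op{j'}{j'}^B$ up to normalization; summing and invoking convexity of $\mathcal{QI}$ closes the argument. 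The other routine check is the reduction step at the start (handling $B'$ and the mixture over $j,j'$), which follows verbatim the convexity-plus-linearity argument already used in the proof of Proposition~\ref{prp:qi_preserving}.
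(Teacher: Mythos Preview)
Your proof is correct and is essentially the paper's argument: both reduce by convexity and linearity to inputs of the form $\sigma^{A'A}\otimes|j\rangle\langle j|^B$, then use the Choi--Jamio{\l}kowski correspondence to carry the $B$-incoherence of the hypothesis~\eqref{eq:cqip} over to the general output---the paper packages the second step as ``there exists a CP map $\mathcal{E}_{j,j'}^{A''\to A'}$ sending $|\Phi_{D_A}\rangle\langle\Phi_{D_A}|^{A''A}$ to $\rho_{j,j'}^{A'A}$,'' which is just the dual phrasing of your contraction identity. One caution: in your formulation the contraction involves the \emph{partial} transpose of $\sigma$ on the $A$ leg, so the per-$j'$ summands need not individually be density operators, and your closing line should therefore rest on block-diagonality in $B$ together with the overall positivity you already invoked (CP-ness of $\Lambda_j$), rather than on convexity of $\mathcal{QI}$.
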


\begin{proof}
  Any completely QI-preserving map satisfies the condition~\eqref{eq:cqip} by definition, and hence it suffices to show the converse; that is, we prove that for any CPTP map $\mathcal{E}^{AB}$, if $\id^{A^{\prime\prime}}\otimes\mathcal{E}^{AB}$ satisfies~\eqref{eq:cqip}, then $\id^{A^{\prime}B^\prime}\otimes\mathcal{E}^{AB}$ for any auxiliary system $\mathcal{H}^{A^{\prime}}\otimes\mathcal{H}^{B^\prime}$ transforms any QI state into a QI state.
  We write the QI state obtained from~\eqref{eq:cqip} as
  \begin{align}
    &\left(\id^{A^{\prime\prime}}\otimes\mathcal{E}^{AB}\right)\left(\Ket{\Phi_{D_A}}\Bra{\Phi_{D_A}}^{A^{\prime\prime} A}\otimes\Ket{j}\Bra{j}^B\right)\nonumber\\
    &=\sum_{k} q^{\left(j\right)}\left(k\right)\sigma_{k}^{\left(j\right)}{}^{A^{\prime\prime}A}\otimes\Ket{k}\Bra{k}^B.
  \end{align}
  Consider any QI state
  \begin{equation}
    \label{eq:rho_aab_cqip}
    \rho^{A^{\prime}AB^{\prime}B}=\sum_{j,j^\prime} p\left(j,j^\prime\right)\rho_{j,j^\prime}^{A^{\prime}A}\otimes\Ket{j^\prime j}\Bra{j^\prime j}^{B^\prime B},
  \end{equation}
  where we write $\Ket{j^\prime j}=\Ket{j^\prime}\otimes\Ket{j}$.
  Since $\Ket{\Phi_{D_A}}^{A^{\prime\prime} A}$ is a maximally entangled state, for each $j$ and $j^\prime$, there exists a CPTP linear map $\mathcal{E}_{j,j^\prime}^{A^{\prime\prime}\to A^{\prime}}$ from $A^{\prime\prime}$ to $A^{\prime}$ that transforms $\Ket{\Phi_{D_A}}^{A^{\prime\prime} A}$ into $\rho_{j,j^\prime}^{A^{\prime}A}$, that is,
  \begin{equation}
    \left(\mathcal{E}_{j,j^\prime}^{A^{\prime\prime}\to A^{\prime}}\otimes\id^A\right)\left(\Ket{\Phi_{D_A}}\Bra{\Phi_{D_A}}^{A^{\prime\prime} A}\right)=\rho_{j,j^\prime}^{A^{\prime}A}.
  \end{equation}
  Then, we obtain
  \begin{align}
    &\left(\id^{A^{\prime}B^\prime}\otimes\mathcal{E}^{AB}\right)\left(\rho^{A^{\prime}AB^\prime B}\right)\nonumber\\
    &=\sum_{j,j^\prime} p\left(j,j^\prime\right)\left(\id^{A^{\prime}B^\prime}\otimes\mathcal{E}^{AB}\right)\left(\rho_{j,j^\prime}^{A^{\prime}A}\otimes\Ket{j^\prime j}\Bra{j^\prime j}^{B^\prime B}\right)\nonumber\\
    &=\sum_{j,j^\prime} p\left(j,j^\prime\right)\left(\mathcal{E}_{j,j^\prime}^{A^{\prime\prime}\to A^{\prime}}\otimes\id^{B^\prime}\otimes\mathcal{E}^{AB}\right)\left(\Ket{\Phi_{D_A}}\Bra{\Phi_{D_A}}^{A^{\prime\prime} A}\right.\nonumber\\
    &\qquad\left.\otimes\Ket{j^\prime j}\Bra{j^\prime j}^{B^\prime B}\right)\nonumber\\
    &=\sum_{j,j^\prime,k} p\left(j,j^\prime\right)q^{\left(j\right)}\left(k\right)\nonumber\\
    &\qquad\left[\left(\mathcal{E}_{j,j^\prime}^{A^{\prime\prime}\to A^{\prime}}\otimes\id^{A}\right)\left(\sigma_{k}^{\left(j\right)}{}^{A^{\prime\prime}A}\right)\right]\otimes\Ket{j^\prime k}\Bra{j^\prime k}^{B^\prime B}.
  \end{align}
  Since the state in the last line is a QI state, we obtain the conclusion.
\end{proof}

Using the characterization of QIP and CQIP shown in Propositions~\ref{prp:qi_preserving} and~\ref{prp:cqip}, we prove the equivalence of QIP and CQIP as follows.

\begin{theorem}[Equivalence of QI-preserving operations and completely QI-preserving operations]
It holds that
\begin{equation}
\mathsf{CQIP}=\mathsf{QIP}.
\end{equation}
\end{theorem}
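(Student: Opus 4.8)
The plan is to establish the two inclusions $\mathsf{CQIP}\subseteq\mathsf{QIP}$ and $\mathsf{QIP}\subseteq\mathsf{CQIP}$ separately. The first inclusion is immediate: taking the trivial auxiliary system $\mathcal{H}^{A'}\otimes\mathcal{H}^{B'}=\mathbb{C}\otimes\mathbb{C}$ in Definition~\ref{eq:def_cqip} shows that any completely QI-preserving map is in particular QI-preserving, so $\mathsf{CQIP}\subseteq\mathsf{QIP}$. All the work is in the reverse direction: given $\mathcal{E}^{AB}\in\mathsf{QIP}$, I must show $\mathrm{id}^{A'B'}\otimes\mathcal{E}^{AB}$ preserves QI states for every auxiliary system $\mathcal{H}^{A'}\otimes\mathcal{H}^{B'}$.

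First I would invoke Proposition~\ref{prp:cqip}: to prove $\mathcal{E}^{AB}\in\mathsf{CQIP}$ it suffices to verify the single family of conditions~\eqref{eq:cqip}, namely that $\left(\mathrm{id}^{A''}\otimes\mathcal{E}^{AB}\right)\left(\Ket{\Phi_{D_A}}\Bra{\Phi_{D_A}}^{A''A}\otimes\Ket{j}\Bra{j}^B\right)\in\mathcal{QI}$ for each $j\in\{0,\ldots,D_B-1\}$, where $\mathcal{QI}$ here means quantum on $A''A$ and incoherent on $B$. So the whole theorem reduces to checking that this particular maximally-entangled input, tensored with a fixed incoherent basis state on $B$, is sent into $\mathcal{QI}$ — using only the hypothesis that $\mathcal{E}^{AB}$ preserves the \emph{unextended} QI states. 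Next I would bring in Proposition~\ref{prp:qi_preserving}: by hypothesis, for all $a,b\in\{0,\ldots,D_A-1\}$ and the fixed $j$, we have $\mathcal{E}^{AB}\left(\rho_{a,b}^A\otimes\Ket{j}\Bra{j}^B\right)\in\mathcal{QI}$, i.e. each of these outputs has the form $\sum_k q_k\,\tau_k^A\otimes\Ket{k}\Bra{k}^B$ with $B$-part diagonal in the reference basis.

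The key step is then the following linear-algebra observation. Writing $\Ket{\Phi_{D_A}}\Bra{\Phi_{D_A}}^{A''A}=\frac{1}{D_A}\sum_{a,b}\Ket{a}\Bra{b}^{A''}\otimes\Ket{a}\Bra{b}^A$, I want to re-express the operator $\Ket{a}\Bra{b}^A$ (and hence the whole sum) in terms of the basis $\{\rho_{a,b}^A\}$ from~\eqref{eq:basis_state}. Concretely, for $a<b$ one has $\Ket{a}\Bra{b}+\Ket{b}\Bra{a}=2\rho_{a,b}^A-\Ket{a}\Bra{a}-\Ket{b}\Bra{b}$ and $\mathrm{i}(\Ket{a}\Bra{b}-\Ket{b}\Bra{a})$ is likewise a real combination of $\rho_{b,a}^A$, $\Ket{a}\Bra{a}$, $\Ket{b}\Bra{b}$; solving gives each $\Ket{a}\Bra{b}^A$ as an explicit (complex) linear combination of the $\rho_{a',b'}^A$. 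Tensoring each such identity on the $A''$ side with $\Ket{a}\Bra{b}^{A''}$ and on the $B$ side with $\Ket{j}\Bra{j}^B$, applying $\mathrm{id}^{A''}\otimes\mathcal{E}^{AB}$, and using linearity, the output is a linear combination of operators of the form $\Ket{a}\Bra{b}^{A''}\otimes\mathcal{E}^{AB}\left(\rho_{a',b'}^A\otimes\Ket{j}\Bra{j}^B\right)$. Since each $\mathcal{E}^{AB}\left(\rho_{a',b'}^A\otimes\Ket{j}\Bra{j}^B\right)$ is QI on $AB$ — diagonal on $B$ — every term in the sum is diagonal in the $B$-reference basis, hence so is the total; and being also a valid density operator (it is the image of a state under a CPTP map), it lies in the set of states that are quantum on $A''A$ and incoherent on $B$, which is exactly the conclusion~\eqref{eq:cqip}. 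Proposition~\ref{prp:cqip} then upgrades this to full membership in $\mathsf{CQIP}$, completing the proof.

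The main obstacle I anticipate is purely bookkeeping: carefully carrying out the change of basis from $\{\Ket{a}\Bra{b}^A\}$ to $\{\rho_{a,b}^A\}$ with the correct complex coefficients (the $a<b$ versus $a>b$ cases in~\eqref{eq:basis_state} must be kept straight), and then tracking that these coefficients, which live on the $A''$-factor, do not interfere with the diagonality on $B$ — which they cannot, since $\mathcal{E}^{AB}$ never touches $A''$. There is no genuine analytic difficulty; the content is entirely that (i) the $\rho_{a,b}^A$ span all of $A$ and (ii) QI-ness is a convex-linear constraint on the $B$-marginal only, so it is stable under arbitrary linear recombination on $A$ and $A''$. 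One should also double-check the trace-preservation/positivity of the final operator is automatic (it is, as the image of a state), so that "diagonal on $B$" is all that needs to be verified by hand.
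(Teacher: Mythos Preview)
Your proposal is correct and follows essentially the same route as the paper: both directions invoke Propositions~\ref{prp:qi_preserving} and~\ref{prp:cqip}, and the nontrivial inclusion $\mathsf{QIP}\subseteq\mathsf{CQIP}$ is obtained by expanding $\Ket{\Phi_{D_A}}\Bra{\Phi_{D_A}}^{A''A}\otimes\Ket{j}\Bra{j}^B$ linearly, using that each $\mathcal{E}^{AB}(\rho_{a,b}^A\otimes\Ket{j}\Bra{j}^B)$ is diagonal on $B$, and concluding that the full output is a $B$-diagonal density operator, hence QI. The only cosmetic difference is that the paper phrases ``diagonal on $B$'' as invariance under $\id\otimes\Delta^B$ and works directly with the matrix units $\Ket{i}\Bra{j}^A$ rather than first rewriting them in the $\rho_{a,b}^A$ basis, but the content is identical.
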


\begin{proof}
  It is trivial by definition that any CQIP map is a QIP map, that is,
  \begin{equation}
      \mathsf{CQIP}\subseteqq\mathsf{QIP},
  \end{equation}
  and hence, it suffices to show that any QIP map $\mathcal{E}$ is a CQIP map, that is,
  \begin{equation}
            \mathsf{QIP}\subseteqq\mathsf{CQIP}.
  \end{equation}

  Consider any QIP map $\mathcal{E}^{AB}$ for a bipartite system $A$ and $B$.
  For any $i$, $j$, and $k$, Proposition~\ref{prp:qi_preserving} yields
  \begin{align}
    &\left(\left(\id^A\otimes\Delta^B\right)\circ\mathcal{E}^{AB}\right)\left(\Ket{i}\Bra{j}^A\otimes\Ket{k}\Bra{k}^B\right)\nonumber\\
    &=\mathcal{E}^{AB}\left(\Ket{i}\Bra{j}^A\otimes\Ket{k}\Bra{k}^B\right).
  \end{align}
  Then, it holds that
  \begin{align}
     &\left(\left(\id^{A^{\prime\prime} A}\otimes\Delta^B\right)\circ\left(\id^{A^{\prime\prime}}\otimes\mathcal{E}^{AB}\right)\right)\left(\Ket{i}\Bra{j}^{A^{\prime\prime}}\otimes\Ket{i}\Bra{j}^A\right.\nonumber\\
     &\qquad\left.\otimes\Ket{k}\Bra{k}^B\right)\nonumber\\
     &=\left(\id^{A^{\prime\prime}}\otimes\mathcal{E}^{AB}\right)\left(\Ket{i}\Bra{j}^{A^{\prime\prime}}\otimes\Ket{i}\Bra{j}^A\otimes\Ket{k}\Bra{k}^B\right),
  \end{align}
  where $A^{\prime\prime}$ is a quantum system whose dimension is the same as $A$.
  Due to linearity, we have
  \begin{equation}
    \id^{A^{\prime\prime}}\otimes\mathcal{E}^{AB}\left(\Ket{\Phi_{D_A}}\Bra{\Phi_{D_A}}^{A^{\prime\prime} A}\otimes\Ket{k}\Bra{k}^B\right)\in\mathcal{QI},
  \end{equation}
  where $D_A$ is the dimension of the system $A$, and $\Ket{\Phi_{D_A}}^{A^{\prime\prime} A}$ is a maximally entangled state between $A^{\prime\prime}$ and $A$.
  Therefore, Proposition~\ref{prp:cqip} implies that $\mathcal{E}^{AB}$ is a CQIP map, which yields the conclusion.
\end{proof}

\section{\label{sec:hierarchy}Hierarchy of operations for distributed manipulation of coherence and entanglement}

In this section, we establish hierarchies of operations for distributed manipulation of coherence and entanglement, as depicted in Fig.~\ref{fig:hierarchy}.
In particular, we prove strict inclusion relations among the classes of operations that include LQICC and SQI, or LICC and SI, as special cases.

We prove the following theorem establishing the hierarchies of operations.
The proof of this theorem is based on propositions that we will prove later in this section for showing each strict inclusion in~\eqref{eq:hierarchy_qi} and~\eqref{eq:hierarchy_i} in the theorem.

\begin{theorem}
[\label{thm:inclusion}Hierarchy of operations in distributed manipulation of coherence and entanglement]
  It holds that
  \begin{align}
  \label{eq:hierarchy_qi}
  \begin{split}
    &\textsf{1-LQICC}\subsetneqq\mathsf{LQICC}\subsetneqq\mathsf{SQI}\\
    &\quad\subsetneqq\left(\mathsf{QIP}\cap\mathsf{SEP}\right)\subsetneqq\left(\mathsf{QIP}\cap\mathsf{PPT}\right)\subsetneqq\mathsf{QIP},
  \end{split}\\
  \label{eq:hierarchy_i}
  \begin{split}
      &\textsf{1-LICC}\subsetneqq\mathsf{LICC}\subsetneqq\mathsf{SI}\\
      &\quad\subsetneqq\left(\mathsf{MIO}\cap\mathsf{SEP}\right)\subsetneqq\left(\mathsf{MIO}\cap\mathsf{PPT}\right)\subsetneqq\mathsf{MIO},
  \end{split}
  \end{align}
  where $\mathsf{MIO}$ is the set of maximally incoherent operations on $AB$.
\end{theorem}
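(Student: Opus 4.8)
The plan is to prove~\eqref{eq:hierarchy_qi} and~\eqref{eq:hierarchy_i} arrow by arrow, establishing each inclusion $\subseteq$ directly from the definitions and each strictness by exhibiting a witness operation in the larger class but not the smaller one --- most conveniently obtained as the optimal strategy for an information-processing task on which the larger class strictly outperforms the smaller, since a performance gap on some task certifies that the two classes differ. The two chains are parallel: the $\mathcal{QI}$-chain uses $\mathsf{QIP}$ where the $\mathcal{I}$-chain uses $\mathsf{MIO}$, and ``incoherent on $B$'' becomes ``incoherent on both $A$ and $B$'' throughout, so I would carry out the argument for~\eqref{eq:hierarchy_qi} and remark that every step transfers verbatim, with $\mathcal I$ of~\eqref{eq:I} in place of $\mathcal{QI}$. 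Note that $\mathsf{QIP},\mathsf{MIO},\mathsf{SEP},\mathsf{PPT}$ are honest sets of CPTP maps, so the classes obtained by intersection carry no ambiguity of the kind discussed in Remark~\ref{remark:op_int}.

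For the inclusions: $\textsf{1-LQICC}\subseteq\mathsf{LQICC}$ is immediate, and $\mathsf{SQI}\subseteq\mathsf{SEP}$ together with $(\mathsf{QIP}\cap\mathsf{SEP})\subseteq(\mathsf{QIP}\cap\mathsf{PPT})\subseteq\mathsf{QIP}$ follow from $\mathsf{SEP}\subseteq\mathsf{PPT}$ (recalled in the preliminaries) and $X\subseteq Y\Rightarrow Z\cap X\subseteq Z\cap Y$. The two substantive inclusions are $\mathsf{LQICC}\subseteq\mathsf{SQI}$ and $\mathsf{SQI}\subseteq\mathsf{QIP}$. For the first, an $\mathsf{LQICC}$ protocol already supplies Kraus operators $M_i^A\otimes K_i^B$ in which $M_i^A$ is a composition of $A$'s conditional local operations and $K_i^B$ a composition of $B$'s conditional incoherent Kraus operators; since a composition of operators of the form~\eqref{eq:io_kraus} is again of that form (the functions $f_i$ compose and the coefficients $c_i$ multiply), each $K_i^B$ satisfies~\eqref{eq:io_kraus}, which is exactly the definition of $\mathsf{SQI}$. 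For the second, applying an $\mathsf{SQI}$ map to a generic QI state $\sum_j p(j)\rho_j^A\otimes\Ket{j}\Bra{j}^B$ yields $\sum_{i,j}p(j)\,(M_i^A\rho_j^A M_i^{A\dag})\otimes(K_i^B\Ket{j}\Bra{j}^B K_i^{B\dag})$, and by~\eqref{eq:io_kraus} the $B$-factor is $|c_i(j)|^2\Ket{f_i(j)}\Bra{f_i(j)}^B$, so the output is a convex combination of terms (operator on $A$)$\,\otimes\,$(reference-basis projector on $B$), hence lies in $\mathcal{QI}$; taking also $M_i^A$ of the form~\eqref{eq:io_kraus} gives $\mathsf{LICC}\subseteq\mathsf{SI}\subseteq\mathsf{MIO}$.

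For strictness I would organize the witnesses by type. (i) $\textsf{1-LQICC}\subsetneqq\mathsf{LQICC}$ and $\textsf{1-LICC}\subsetneqq\mathsf{LICC}$: a task solvable with two rounds of (incoherence-restricted) classical communication but provably not with one, adapting a known round separation of $\mathsf{LOCC}$ and verifying the converse is not circumvented by the incoherence restriction. (ii) $\mathsf{LQICC}\subsetneqq\mathsf{SQI}$ and $\mathsf{LICC}\subsetneqq\mathsf{SI}$: a separable-but-not-$\mathsf{LOCC}$ separation of ``nonlocality without entanglement'' type, realised with measurement operators of the incoherent form~\eqref{eq:io_kraus} on the restricted party or parties, so the induced map lies in $\mathsf{SQI}$ (resp.\ $\mathsf{SI}$) but not $\mathsf{LQICC}$ (resp.\ $\mathsf{LICC}$), certified by a local-discrimination task. (iii) $\mathsf{SQI}\subsetneqq(\mathsf{QIP}\cap\mathsf{SEP})$ and $\mathsf{SI}\subsetneqq(\mathsf{MIO}\cap\mathsf{SEP})$: a separable, QI-preserving (resp.\ $\mathcal I$-preserving) channel for which no product-Kraus decomposition has all $B$-parts (resp.\ all $A$- and $B$-parts) of the incoherent form~\eqref{eq:io_kraus}; as this concerns \emph{all} decompositions, I would certify it via a converse bound under $\mathsf{SQI}$/$\mathsf{SI}$ that the witness violates. (iv) $(\mathsf{QIP}\cap\mathsf{SEP})\subsetneqq(\mathsf{QIP}\cap\mathsf{PPT})$ and $(\mathsf{MIO}\cap\mathsf{SEP})\subsetneqq(\mathsf{MIO}\cap\mathsf{PPT})$: a channel $(\id^A\otimes\Delta^B)\circ\mathcal N$ (resp.\ $(\Delta^A\otimes\Delta^B)\circ\mathcal N$) built from a $\mathsf{PPT}$ channel $\mathcal N$ whose underlying PPT-entangled state is chosen so that its entanglement survives the output dephasing; the dephasing forces the output into $\mathcal{QI}$ (resp.\ $\mathcal I$), hence $\mathsf{QIP}$ (resp.\ $\mathsf{MIO}$), $\mathsf{PPT}$-ness is closed under composition, and non-separability of the Choi operator is inherited. (v) $(\mathsf{QIP}\cap\mathsf{PPT})\subsetneqq\mathsf{QIP}$ and $(\mathsf{MIO}\cap\mathsf{PPT})\subsetneqq\mathsf{MIO}$: a single explicit witness $\mathcal{E}\coloneqq(\id^A\otimes\Delta^B)\circ\mathcal U$ with $\mathcal U(\cdot)=\mathrm{CNOT}^{A\to B}(\cdot)\,\mathrm{CNOT}^{A\to B}$ (control $A$, target $B$); its $B$-register is always dephased, so the output is in $\mathcal{QI}$, and it sends every computational-basis state, hence every state of $\mathcal I$, into $\mathcal I$, so $\mathcal E\in\mathsf{QIP}\cap\mathsf{MIO}$; but a finite computation of its Choi operator shows that, across the Alice/Bob cut, it decomposes into two Bob-orthogonal blocks one of which is a maximally entangled pair, so $J(\mathcal E)^{\mathrm{T}_{A^\prime A}}$ is not positive semidefinite and $\mathcal E\notin\mathsf{PPT}$.

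The main obstacle is steps (iii) and (iv), which demand a negative structural fact --- that a channel admits no decomposition of the restricted type, and in (iv) that its Choi operator stays non-separable after dephasing --- which cannot be read off a single representation and is best handled by reducing to an operational task and invoking a converse performance bound for the smaller class (an SDP or entropic bound under $\mathsf{SQI}$/$\mathsf{SI}$, or a separability criterion for the relevant bipartite cut in (iv)). A secondary point is (i): one must ensure the chosen task genuinely requires two rounds under the incoherence restriction rather than only under generic $\mathsf{LOCC}$; and in (iv) one must exhibit a concrete PPT-entangled state whose entanglement is not destroyed by the relevant dephasing.
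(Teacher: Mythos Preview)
Your overall plan matches the paper's, and steps (i)--(ii) align with it: for (i) the paper uses the origami family of mixtures of maximally entangled states, whose $r$-round distillation protocol is already in $r$-$\mathsf{LICC}$ while the impossibility holds for all of $(r{-}1)$-$\mathsf{LOCC}\supseteq(r{-}1)$-$\mathsf{LQICC}$; for (ii) it cites a local-discrimination separation. The substantive differences are at (iii)--(v).

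For (iii), which you flag as the main obstacle, the paper sidesteps the need to rule out every product-Kraus decomposition by taking $\dim\mathcal H^A=1$. Then every bipartite map is trivially separable, a state is QI iff it is incoherent on $B$, and the constraint~\eqref{eq:io_kraus} on $K_i^B$ is exactly the \textsf{IO} condition; hence $\mathsf{SQI}$ and $\mathsf{SI}$ collapse to \textsf{IO} on $B$ while $\mathsf{QIP}\cap\mathsf{SEP}$ and $\mathsf{MIO}\cap\mathsf{SEP}$ collapse to \textsf{MIO} on $B$, and the known $\textsf{IO}\subsetneqq\textsf{MIO}$ gap (e.g.\ coherence dilution) finishes it. For (iv) the paper realises your ``force the output to be incoherent'' idea concretely via a known \textsf{PPT}-vs-\textsf{SEP} state-discrimination gap: the PPT measurement followed by recording the outcome in the reference basis is a \textsf{PPT} map in $\mathsf{QIP}\cap\mathsf{MIO}$ achieving success probability $7/8$, whereas any \textsf{SEP} map with classical output induces a separable POVM and is capped at $3/4$. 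This dispenses with your residual problem of exhibiting a PPT-entangled Choi operator whose entanglement survives dephasing.

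For (v) your single witness $\mathcal E=(\id^A\otimes\Delta^B)\circ\mathrm{CNOT}^{A\to B}$ is correct and is actually more direct than the paper's treatment. The paper uses SWAP for the $\mathsf{MIO}$ chain, notes explicitly that SWAP is not QI-preserving and hence cannot settle the $\mathsf{QIP}$ chain, and then obtains $(\mathsf{QIP}\cap\mathsf{PPT})\subsetneqq\mathsf{QIP}$ only through the numerical SDP separation in one-shot assisted coherence distillation (Proposition~\ref{prp:fidelity}). Your $\mathcal E$ lies in $\mathsf{QIP}\cap\mathsf{MIO}$ for the reasons you give, and the $\Ket{0}^{B^\prime}$-block of $J(\mathcal E)$ is (up to normalisation) the projector onto $\Ket{00}^{A^\prime A}\Ket{0}^{B}+\Ket{11}^{A^\prime A}\Ket{1}^{B}$, whose partial transpose on $A^\prime A$ has eigenvalue $-1$; so one analytic example handles both chains.
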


\begin{proof}
  The inclusions $\textsf{1-LQICC}\subsetneqq\mathsf{LQICC}$ and  $\textsf{1-LICC}\subsetneqq\mathsf{LICC}$ follow from Proposition~\ref{prp:r_licc} using one-shot entanglement distillation.
  Reference~\cite{S1} shows $\mathsf{LQICC}\subsetneqq\mathsf{SQI}$ and $\mathsf{LICC}\subsetneqq\mathsf{SI}$ using local state discrimination.
  We show $\mathsf{SQI}\subsetneqq\left(\mathsf{QIP}\cap\mathsf{SEP}\right)$ and $\mathsf{SI}\subsetneqq\left(\mathsf{MIO}\cap\mathsf{SEP}\right)$ in Proposition~\ref{prp:sep} using coherence dilution.
  As for $\left(\mathsf{QIP}\cap\mathsf{SEP}\right)\subsetneqq\left(\mathsf{QIP}\cap\mathsf{PPT}\right)$ and $\left(\mathsf{MIO}\cap\mathsf{SEP}\right)\subsetneqq\left(\mathsf{MIO}\cap\mathsf{PPT}\right)$, we prove the separation in Proposition~\ref{prp:sep} using local state discrimination, whereas a conventional way of separating $\mathsf{SEP}$ and $\mathsf{PPT}$ by considering preparation of a PPT entangled state is insufficient.
  We prove $\left(\mathsf{MIO}\cap\mathsf{PPT}\right)\subsetneqq\mathsf{MIO}$ by showing an instance of \textsf{MIO} that is not in $\mathsf{MIO}\cap\mathsf{PPT}$, which is the SWAP operation.
  Note that the SWAP operation is in \textsf{MIO} but not in \textsf{QIP}, and hence it is not straightforward to generalize this example to $\left(\mathsf{QIP}\cap\mathsf{PPT}\right)\subsetneqq\mathsf{QIP}$.
  To show $\left(\mathsf{QIP}\cap\mathsf{PPT}\right)\subsetneqq\mathsf{QIP}$, we will investigate assisted distillation of coherence in the next section, especially in Proposition~\ref{prp:fidelity}. 
\end{proof}

In the following, we show propositions used in the above proof of the hierarchies.
The following proposition shows that the difference between $r$-round LQICC/LICC and $(r-1)$-round LQICC/LICC for any $r\geqq 1$, respectively, arises in one-shot entanglement distillation.

\begin{proposition}[\label{prp:r_licc}Separation between $r$-round LQICC/LICC and $(r-1)$-round LQICC/LICC] For any $r\in\left\{1,2,\ldots\right\}$,
there exists a quantum state $\rho_r^{AB}$ shared between $A$ and $B$ for which one-shot entanglement distillation of $1$ ebit is achievable by $r$-round LICC but not achievable by $(r-1)$-round LOCC, leading to a separation between $r$- and $(r-1)$-round LICC, and a separation between $r$- and $(r-1)$-round LQICC\@. 
\end{proposition}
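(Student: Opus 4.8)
The plan is to leverage the well-known round-separation phenomenon for LOCC, which is witnessed by explicit bipartite pure states, and to observe that the relevant protocols can be taken to respect the incoherence restriction on one or both parties, so that the separation descends to LQICC and LICC. Concretely, I would start from the family of states constructed in Ref.~\cite{Chitambar2014} (and the earlier works on LOCC round complexity) for which distilling one ebit of entanglement requires exactly $r$ rounds of classical communication: there is a state $\rho_r^{AB}$ such that some $r$-round LOCC protocol deterministically outputs a maximally entangled qubit pair, while no $(r-1)$-round LOCC protocol can do so (even with success probability bounded away from failure by the appropriate margin). The target is to show that the $r$-round protocol can be implemented within $r$-\textsf{LICC}, hence \emph{a fortiori} within $r$-\textsf{LQICC}, while the impossibility for $(r-1)$ rounds is inherited directly from the LOCC impossibility since $(r-1)$-\textsf{LQICC}$\subseteq(r-1)$-\textsf{LOCC} (and likewise for LICC).

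The key steps, in order, are: (1) recall/cite the existence of the witness state $\rho_r^{AB}$ and the $r$-round LOCC distillation protocol for it; (2) choose the reference basis on $A$ and $B$ so that the local measurements and corrections appearing in that protocol are incoherent operations — since the protocol's local operations are (up to relabelling) measurements in, and unitaries that permute, suitable product bases, one can declare those bases to be the reference bases, making every local Kraus operator of the form~\eqref{eq:io_kraus}; (3) conclude that the protocol lies in $r$-\textsf{LICC}; (4) invoke the trivial inclusions $r$-\textsf{LICC}$\subseteq r$-\textsf{LQICC}$\subseteq r$-\textsf{LOCC} and $(r-1)$-\textsf{LQICC}$\subseteq(r-1)$-\textsf{LOCC} together with the LOCC lower bound to get that the same task is not achievable by $(r-1)$-\textsf{LQICC} or $(r-1)$-\textsf{LICC}; (5) read off the strict separations $\textsf{1-LQICC}\subsetneqq\mathsf{LQICC}$ and $\textsf{1-LICC}\subsetneqq\mathsf{LICC}$ as the $r=1$ (more precisely, $r=2$ against $r=1$, or a general $r$ against $r-1$) instance, which is the case actually used in the proof of Theorem~\ref{thm:inclusion}.

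The main obstacle I anticipate is step~(2): one must verify that \emph{some} optimal $r$-round LOCC protocol for $\rho_r^{AB}$ uses only local operations that are simultaneously incoherent in a single fixed product basis on each side, i.e.\ that the ``operational intersection'' discussed in Remark~\ref{remark:op_int} is nonempty for this protocol. If the standard constructions use genuinely coherent local unitaries (e.g.\ rotations interpolating between bases), one fix is to replace them by incoherent measurement-and-classical-relabelling steps at the cost of more rounds, which would weaken the statement; a cleaner route is to pick the witness family (such as domino-type or flag-based states built from orthonormal product bases) whose known distillation protocols are manifestly basis-respecting, and take the reference basis to be that product basis. A secondary, milder point is to make sure the quantitative failure gap for $(r-1)$ rounds is stated in the one-shot ``$1$ ebit, exactly'' form claimed; this is exactly what the cited results provide, so it only requires citing them with the right parameters.
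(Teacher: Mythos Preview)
Your proposal is correct and follows essentially the same approach as the paper: use a known $r$-round LOCC separation for one-shot distillation of $1$ ebit, observe that the achieving protocol is actually $r$-\textsf{LICC} once the reference bases are chosen appropriately, and then invoke the inclusions $(r-1)\text{-}\mathsf{LICC}\subseteq(r-1)\text{-}\mathsf{LQICC}\subseteq(r-1)\text{-}\mathsf{LOCC}$ for the converse. The paper resolves exactly the obstacle you anticipate in step~(2) by committing to the specific ``origami distribution'' states of Ref.~\cite{E1} (mixtures of maximally entangled states written in the reference basis), for which the $r$-round protocol consists only of incoherent measurements; this is precisely your ``cleaner route'' of picking a basis-respecting witness family.
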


\begin{proof}
  Consider a family of quantum states that are represented as a convex combination of maximally entangled states corresponding to the origami distribution in Ref.~\cite{E1}, where each of the maximally entangled states is represented in terms of the reference basis for $A$ and $B$.
  Then, Ref.~\cite{E1} shows that for each $r\in\left\{1,2,\ldots\right\}$, there is a mixed state in this family for which one-shot entanglement distillation of $1$ ebit cannot be achieved by any $(r-1)$-round LOCC, yet there exists an $r$-round LOCC protocol that achieves one-shot entanglement distillation of $1$ ebit from the same mixed state.
  Since $(r-1)$-round LOCC includes $(r-1)$-round LICC and  $(r-1)$-round LQICC\@, this one-shot entanglement distillation cannot be achieved by any $(r-1)$-round LICC or any $(r-1)$-round LQICC\@.

  If each maximally entangled state for this mixed state is represented in terms of the reference basis for $A$ and $B$, the $r$-round LOCC protocol shown in Ref.~\cite{E1} consists only of incoherent measurements; that is, this protocol is an $r$-round LICC protocol, which yields the separation between $r$-round LICC and $(r-1)$-round LICC\@. 
  As for the separation between $r$-round LQICC and $(r-1)$-round LQICC, since $r$-round LICC is a subset of $r$-round LQICC\@, the above $r$-round LICC protocol also yields the separation between $r$-round LQICC and $(r-1)$-round LQICC\@.
\end{proof}

As for the separation between $\mathsf{SQI}$ and $\mathsf{QIP}\cap\mathsf{SEP}$ and that between $\mathsf{SI}$ and $\mathsf{MIO}\cap\mathsf{SEP}$, we can consider a special case where $A$ is one-dimensional, so that these operations can be regarded as \textsf{IO} and \textsf{MIO} on $B$.
Then, the difference between \textsf{IO} and \textsf{MIO} yields the following separation. 

\begin{proposition}
[\label{prp:sep}Separations between $\mathsf{SQI}$ and $\mathsf{QIP}\cap\mathsf{SEP}$ and between $\mathsf{SI}$ and $\mathsf{MIO}\cap\mathsf{SEP}$]
It holds that
\begin{align}
    \label{eq:sqi_qip_sep}
    \mathsf{SQI}&\subsetneqq\mathsf{QIP}\cap\mathsf{SEP},\\
    \label{eq:si_ip_sep}
    \mathsf{SI}&\subsetneqq\mathsf{MIO}\cap\mathsf{SEP}.
\end{align}
\end{proposition}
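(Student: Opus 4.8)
The plan is to treat each of the two strict inclusions as a containment together with a separation, and to reduce the separation to the well-known strict inclusion $\mathsf{IO}\subsetneq\mathsf{MIO}$ by collapsing the party $A$ to be trivial, with a coherence-dilution task supplying the witness for that strict inclusion. First I would verify the containments $\mathsf{SQI}\subseteq\mathsf{QIP}\cap\mathsf{SEP}$ and $\mathsf{SI}\subseteq\mathsf{MIO}\cap\mathsf{SEP}$ directly from the Kraus forms: any map in $\mathsf{SQI}$ or $\mathsf{SI}$ has by definition a Kraus decomposition into product operators $M_i^A\otimes K_i^B$, hence lies in $\mathsf{SEP}$; and for the $\mathsf{QIP}$ part, given a QI state $\sum_j p(j)\,\rho_j^A\otimes\Ket{j}\Bra{j}^B$, the incoherent form~\eqref{eq:io_kraus} of $K_i^B$ gives $K_i^B\Ket{j}\Bra{j}^B (K_i^B)^\dagger=|c_i(j)|^2\,\Ket{f_i(j)}\Bra{f_i(j)}^B$, so the output $\sum_{i,j}p(j)\,M_i^A\rho_j^A(M_i^A)^\dagger\otimes|c_i(j)|^2\Ket{f_i(j)}\Bra{f_i(j)}^B$ is again of QI form, giving $\mathsf{SQI}\subseteq\mathsf{QIP}$. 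Applying the same computation additionally to the incoherent $M_i^A$ of an $\mathsf{SI}$ map shows that any incoherent input on $AB$ is sent to an incoherent output, so $\mathsf{SI}\subseteq\mathsf{MIO}$.

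For strictness I would specialise to $D_A=1$, i.e.\ $\mathcal{H}^A\cong\mathbb{C}$, so that a bipartite channel is effectively a channel on $B$ alone. In this regime the QI states of $AB$ are exactly the incoherent states of $B$, and the set $\mathcal{I}$ of $AB$ from~\eqref{eq:I} likewise reduces to the incoherent states of $B$; every channel on $B$ is trivially in $\mathsf{SEP}$, since the $A$-factor of each Kraus operator is a scalar; hence both $\mathsf{QIP}\cap\mathsf{SEP}$ and $\mathsf{MIO}\cap\mathsf{SEP}$ collapse to $\mathsf{MIO}$ on $B$, while both $\mathsf{SQI}$ and $\mathsf{SI}$ collapse to $\mathsf{IO}$ on $B$ (the scalar $A$-factor can be absorbed into $K_i^B$ without leaving the incoherent-Kraus form, and the incoherence constraint on $M_i^A$ in $\mathsf{SI}$ becomes vacuous). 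Consequently, any operation on $B$ lying in $\mathsf{MIO}\setminus\mathsf{IO}$, regarded as a map on $AB$ with trivial $A$, simultaneously witnesses $\mathsf{SQI}\subsetneq\mathsf{QIP}\cap\mathsf{SEP}$ and $\mathsf{SI}\subsetneq\mathsf{MIO}\cap\mathsf{SEP}$.

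It then remains to exhibit a $\mathsf{MIO}$ operation on $B$ admitting no incoherent Kraus decomposition, and this is the step I expect to be the main obstacle. Here I would appeal to coherence dilution: it is known that there is a target state $\rho$ and an integer $m$ such that $\rho$ is obtainable from the maximally coherent pure state $\Ket{\Psi_{2^m}}$ by some $\mathsf{MIO}$ map but by no $\mathsf{IO}$ map; any $\mathsf{MIO}$ map realising this transformation is then not $\mathsf{IO}$, and substituting it into the reduction above finishes the proof. Pinning down (citing, or if necessary re-deriving) such a concrete $\mathsf{MIO}$ operation provably without an incoherent Kraus decomposition is the substantive content; the remainder is bookkeeping, whose only subtlety is to check that each collapse at $D_A=1$ ($\mathsf{SEP}$ becoming vacuous, $\mathsf{SQI}$ and $\mathsf{SI}$ becoming $\mathsf{IO}$ on $B$, and $\mathsf{QIP}$ and $\mathsf{MIO}$ becoming $\mathsf{MIO}$ on $B$) is an exact equality rather than a one-sided inclusion.
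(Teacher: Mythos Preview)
Your proposal is correct and follows essentially the same approach as the paper: reduce to $D_A=1$ so that both inclusions collapse to $\mathsf{IO}\subsetneq\mathsf{MIO}$ on $B$, and invoke coherence dilution for the separating instance. Your write-up is in fact more detailed than the paper's, since you explicitly verify the containments $\mathsf{SQI}\subseteq\mathsf{QIP}\cap\mathsf{SEP}$ and $\mathsf{SI}\subseteq\mathsf{MIO}\cap\mathsf{SEP}$ from the Kraus form, whereas the paper treats these as implicit.
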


\begin{proof}
The separation in~\eqref{eq:sqi_qip_sep} and~\eqref{eq:si_ip_sep} is a special case of separation between IO and MIO, by considering $A$'s system to be one-dimensional.
This separation between IO and MIO can be seen in tasks such as coherence dilution~\cite{Z1}.
Therefore, when $A$'s system is one-dimensional, there exists an MIO on $B$ included in $\mathsf{QIP}\cap\mathsf{SEP}\setminus\mathsf{SQI}$ and $\mathsf{MIO}\cap\mathsf{SEP}\setminus\mathsf{SI}$, whereas any IO on $B$ is included in $\textsf{LQICC}$ and $\textsf{LICC}$, and hence in $\textsf{SQI}$ and $\textsf{SI}$.
\end{proof}

To show separation between $\mathsf{QIP}\cap\mathsf{SEP}$ and $\mathsf{QIP}\cap\mathsf{PPT}$, and show that between $\mathsf{MIO}\cap\mathsf{SEP}$ and $\mathsf{MIO}\cap\mathsf{PPT}$,
we use local state discrimination.
Note that the separation between $\mathsf{SEP}$ and $\mathsf{PPT}$ in the entanglement theory is first proven in Ref.~\cite{H1} based on the fact that there exists a free state of $\mathsf{PPT}$, a PPT state, that is not a free state of $\mathsf{SEP}$, a separable state.
However, this proof based on free states in the entanglement theory is not applicable to our case because the sets of free states for $\mathsf{QIP}\cap\mathsf{PPT}$ and $\mathsf{MIO}\cap\mathsf{PPT}$ are included in $\mathcal{QI}$ and hence are always separable.
Rather than this conventional argument based a PPT entangled state, our proof generalizes a more recent result on separating \textsf{SEP} and \textsf{PPT} based on local state discrimination as follows. 

\begin{proposition}[\label{prp:qip_sep_mio_sep}Separations between $\mathsf{QIP}\cap\mathsf{SEP}$ and $\mathsf{QIP}\cap\mathsf{PPT}$ and between $\mathsf{MIO}\cap\mathsf{SEP}$ and $\mathsf{MIO}\cap\mathsf{PPT}$]
It holds that
\begin{align}
    \label{eq:qip_sep_qip_ppt}
    \left(\mathsf{QIP}\cap\mathsf{SEP}\right)&\subsetneqq\left(\mathsf{QIP}\cap\mathsf{PPT}\right),\\
    \label{eq:mio_sep_mio_ppt}
    \left(\mathsf{MIO}\cap\mathsf{SEP}\right)&\subsetneqq\left(\mathsf{MIO}\cap\mathsf{PPT}\right).
\end{align}
\end{proposition}

\begin{proof}
  Recall that there exists a set of mutually orthogonal bipartite pure states for which there exists a PPT measurement achieving success probability $7/8$ in state discrimination~\cite{C4}, and hence a PPT map implemented by this PPT measurement can achieve this success probability.
  However, it is also known that no separable measurement can achieve success probability greater than $3/4$ in this state discrimination~\cite{B1}, and hence by definition, no separable map can achieve this success probability.
  Note that these PPT and separable measurements can be regarded as CPTP maps that transform any input state into an incoherent state representing the probabilistic mixture of the measurement outcomes.
  These CPTP maps corresponding to the measurements are in $\mathsf{QIP}$ and $\mathsf{MIO}$ because the output states are always incoherent.
  Thus, this local state discrimination yields the separations of~\eqref{eq:qip_sep_qip_ppt} and~\eqref{eq:mio_sep_mio_ppt} in terms of the success probability.
\end{proof}

We prove the separation between $\mathsf{MIO}\cap\mathsf{PPT}$ and \textsf{MIO} by showing an instance.
In general, it is not straightforward to identify an instance of operation that is in one class but is probably not in the other class, because the latter no-go theorem is generally hard to prove.
However, we here show that the SWAP operation is such an instance in this case.

\begin{proposition}[\label{prp:mio_ppt}Separation between $\mathsf{MIO}\cap\mathsf{PPT}$ and \textsf{MIO}]
Given any bipartite system $AB$, it holds that
\begin{equation}
    \left(\mathsf{MIO}\cap\mathsf{PPT}\right)\subsetneqq\mathsf{MIO},
\end{equation}
where $\mathsf{MIO}$ is the set of maximally incoherent operations on $AB$.
\end{proposition}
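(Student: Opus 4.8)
The plan is to exhibit the SWAP operation as an instance that is in $\mathsf{MIO}$ but not in $\mathsf{MIO}\cap\mathsf{PPT}$, i.e.\ not PPT. First I would verify that SWAP is maximally incoherent: the SWAP map $\mathcal{S}^{AB}(\rho^{AB}) = \mathrm{SWAP}\,\rho^{AB}\,\mathrm{SWAP}$ sends $\Ket{i}\Bra{i}^A\otimes\Ket{j}\Bra{j}^B$ to $\Ket{j}\Bra{j}^A\otimes\Ket{i}\Bra{i}^B$, so it maps any incoherent state in $\mathcal{I}$ (as defined in~\eqref{eq:I}) to another incoherent state; hence $\mathcal{S}^{AB}\in\mathsf{MIO}$. (This requires $D_A = D_B$ for the map to be well-defined as an endomorphism; I would state this mild assumption, or note that the argument extends to the general case by embedding.)

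Second, I would show that SWAP is not a PPT map. The cleanest route is via the Choi operator: the Choi operator of the SWAP map on $A'B'\to AB$ is (up to normalization) proportional to $\Ket{\Phi}\Bra{\Phi}$ where $\Ket{\Phi}$ is a maximally entangled state across an appropriate bipartition, and applying the partial transpose on $A'A$ as in condition~\eqref{eq:ppt_sdp} produces an operator with a negative eigenvalue — concretely the partial transpose of a maximally entangled projector has eigenvalue $-1/d$ on its antisymmetric subspace. Alternatively, and perhaps more transparently, I would feed SWAP a separable input on an extended system and show it violates the PPT criterion: take the input $\Ket{\Phi^+}\Bra{\Phi^+}^{A'A}\otimes\Ket{\Phi^+}\Bra{\Phi^+}^{B'B}$ (a product of two local maximally entangled states, hence separable across $A'A\,|\,B'B$), apply $\id^{A'B'}\otimes\mathcal{S}^{AB}$, and observe that the output has non-positive partial transpose across $A'A\,|\,B'B$ because SWAP turns these two local ebits into genuine entanglement between the $A$ side and the $B$ side. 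Either computation confirms $\mathcal{S}^{AB}\notin\mathsf{PPT}$, so $\mathcal{S}^{AB}\notin\mathsf{MIO}\cap\mathsf{PPT}$ while $\mathcal{S}^{AB}\in\mathsf{MIO}$, giving the strict inclusion.

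I do not expect any serious obstacle here; the only point requiring a little care is the bookkeeping of which partial transpose / which bipartition witnesses the violation — one must make sure the bipartition used for the PPT test is the $A$-versus-$B$ cut (and, in the Choi formulation, the $A'A$-versus-$B'B$ cut) rather than some other split, since SWAP is of course a perfectly legitimate local-looking object under the wrong bookkeeping. Once the right cut is fixed, the negative eigenvalue is immediate. I would also remark, as the theorem's proof already anticipates, that this SWAP example does \emph{not} transfer to the $\mathsf{QIP}$ hierarchy, since SWAP fails to be QI-preserving (it can turn $\rho^A_0\otimes\Ket{0}\Bra{0}^B$, which is QI, into $\Ket{0}\Bra{0}^A\otimes\rho_0^B$, which is generically not QI because its $B$-part need not be diagonal) — this is why the separation $(\mathsf{QIP}\cap\mathsf{PPT})\subsetneqq\mathsf{QIP}$ is deferred to the assisted-distillation argument of Proposition~\ref{prp:fidelity}.
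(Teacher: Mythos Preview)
Your proposal is correct and follows essentially the same approach as the paper: exhibit SWAP as an operation that is maximally incoherent (since it permutes incoherent basis states) but fails to be a \textsf{PPT} map because the partial transpose of its Choi operator is not positive semidefinite. Your additional remarks --- the equal-dimension caveat, the alternative entanglement-swapping argument, and the observation that SWAP is not \textsf{QIP} --- are all sound elaborations, but the core idea coincides with the paper's.
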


\begin{proof}
  A SWAP unitary operation is an MIO map because SWAP transforms any incoherent state into a swapped incoherent state,
  but this SWAP is not a PPT map because the partial transpose of the Choi operator of SWAP is not positive semidefinite.
  Hence, this SWAP is an example showing the conclusion.
\end{proof}

\section{\label{sec:non_surviving_separation}Asymptotically non-surviving separation of hierarchy in assisted distillation of coherence}

In this section, we investigate a phenomena of asymptotically non-surviving separation~\cite{Y1} of the hierarchy, especially between $1$-\textsf{LQICC} and \textsf{QIP}, in achieving an information-processing task, specifically, assisted distillation of coherence.
The assisted distillation of coherence involves two parties $A$ and $B$ sharing some initial state, and $A$ can perform an arbitrary quantum operation while $B$ has a restriction in generating coherence. The aim of the task is to distill on $B$ as much coherence as possible from the shared initial state with assistance of $A$.
While we prove difference between the classes of operations for distributed manipulation of coherence in the previous section, this difference does not necessarily affect achievability of certain tasks.
Indeed, while $1$-\textsf{LQICC} and \textsf{SQI} are different as the sets of operations, it is known that this difference does not appear in the task of asymptotic assisted distillation of coherence for any pure state~\cite{PhysRevLett.116.070402}.
We here generalize this result to show that even the difference between $1$-\textsf{LQICC} and \textsf{QIP} does not appear in this asymptotic scenario; that is, all the operations in this hierarchy have the same power in achieving this asymptotic task for any initial pure state.
At the same time, we consider a one-shot scenario of this task and discover an instance of the initial pure state for which the amount of distillable coherence with assistance is different under $\mathsf{QIP}\cap\mathsf{PPT}$ and \textsf{QIP}; that is, the separation between $1$-\textsf{LQICC} and \textsf{QIP} arises in this one-shot scenario.
While the proof of no-go theorem for showing the separation in one-shot scenarios is hard in general, we identify such an instance of separation by exploiting a numerical algorithm by SDP based on the characterization of \textsf{QIP} that we have proven in Proposition~\ref{prp:qi_preserving}.

The task of assisted distillation of coherence~\cite{PhysRevLett.116.070402,M1,R1} is defined as follows.
Consider two parties $A$ and $B$.
While the original formulation~\cite{PhysRevLett.116.070402} considers $1$-\textsf{LQICC} as free operations performed by $A$ and $B$, we here generalize the situation to other classes of operations in the hierarchy, and let $\mathcal{O}$ denote a class of free operations on $A$ and $B$ that is in the hierarchy~\eqref{eq:hierarchy_qi} in Theorem~\ref{thm:inclusion}, such as \textsf{QIP}.
Given an arbitrary mixed state $\psi^B$ on $B$, we consider its purification $\Ket{\psi}^{AB}$ shared initially between $A$ and $B$, where $\psi^B=\tr_A\Ket{\psi}\Bra{\psi}^{AB}$.
The state to be distilled on $B$ is a maximally coherent state denoted by
\begin{equation}
  \Ket{\Phi_M}^B\coloneqq\frac{1}{\sqrt{M}}\sum_{j=0}^{M-1}\Ket{j}^B,
\end{equation}
where ${\left\{\Ket{j}\right\}}_j$ is the fixed reference basis on $B$, and $M$ may be called the coherence rank of $\Ket{\Phi_M}^B$, which quantifies the coherence.
To distill as much coherence as possible, the parties perform the free operations $\mathcal{O}$ to transform the initial state, \textit{i.e.}, $\Ket{\psi}^{AB}$, to a maximally coherent state on $B$ up to a given error $\epsilon$ in the trace distance, so that $M$ of $B$'s final maximally coherent state can be maximized.
In a one-shot scenario, the one-shot distillable coherence with assistance is defined as
\begin{align}
    &C_{\mathrm{a},1,\epsilon}^{\mathcal{O}}\left(\psi^B\right)\coloneqq\max\left\{\log_2 M:\right.\nonumber\\
    &\quad \left\|\mathcal{E}^{AB}\left(\Ket{\psi}\Bra{\psi}^{AB}\right)-\Ket{\Phi_M}\Bra{\Phi_M}^B\right\|_1\leqq \epsilon,\nonumber\\
    &\quad M\in\mathbb{N},\,\mathcal{E}^{AB}\in\mathcal{O}\left.\right\}.
\end{align}
Correspondingly in the asymptotic scenario where the task is repeated infinitely many times within a vanishing error, the asymptotic distillable coherence with assistance is defined as
\begin{equation}
  C_{\mathrm{a},\infty}^{\mathcal{O}}\left(\psi^B\right)\coloneqq\lim_{\epsilon\to 0}\lim_{n\to\infty}\frac{1}{n}C_{\mathrm{a},1,\epsilon}^{\mathcal{O}}\left({\left(\psi^B\right)}^{\otimes n}\right),
\end{equation}

We show that the asymptotic distillable coherence with assistance under \textsf{QIP} is always equal to that under  $1$-\textsf{LQICC} for any pure initial state $\Ket{\psi}^{AB}$.
In other words, all the operations in the hierarchy~\eqref{eq:hierarchy_qi} in Theorem~\ref{thm:inclusion} have the same power in this task.

\begin{proposition}[\label{prp:asymptotic}Equivalence of asymptotic distillable coherence with assistance under \textsf{QIP} and $1$-\textsf{LQICC}]
  Given any bipartite pure state $\Ket{\psi}^{AB}$,
  it holds that
  \begin{equation}
    C_{\mathrm{a},\infty}^\textsf{QIP}\left(\psi^{B}\right)=C_{\mathrm{a},\infty}^\textsf{1-LQICC}\left(\psi^{B}\right)=H\left(\Delta\left(\psi^B\right)\right),
  \end{equation}
  where $\psi^B=\tr_A\Ket{\psi}\Bra{\psi}^{AB}$ is the reduced state of $B$ for $\Ket{\psi}^{AB}$, $\Delta$ is the completely dephasing channel, and $H$ is the quantum entropy given by
  \begin{equation}
      H\left(\psi\right)\coloneqq -\tr\psi\log_2\psi.
  \end{equation}
\end{proposition}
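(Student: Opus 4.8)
The plan is to establish the two equalities $C_{\mathrm{a},\infty}^{\textsf{QIP}}(\psi^B)=C_{\mathrm{a},\infty}^{\textsf{1-LQICC}}(\psi^B)=H(\Delta(\psi^B))$ by squeezing $C_{\mathrm{a},\infty}^{\textsf{QIP}}$ between matching upper and lower bounds. The lower bound is essentially free: since $\textsf{1-LQICC}\subseteq\textsf{QIP}$ by Theorem~\ref{thm:inclusion} (indeed by the hierarchy~\eqref{eq:hierarchy_qi}), any protocol valid under $1$-\textsf{LQICC} is valid under \textsf{QIP}, so $C_{\mathrm{a},\infty}^{\textsf{QIP}}\geqq C_{\mathrm{a},\infty}^{\textsf{1-LQICC}}$; and the known result of Ref.~\cite{PhysRevLett.116.070402} gives $C_{\mathrm{a},\infty}^{\textsf{1-LQICC}}(\psi^B)=H(\Delta(\psi^B))$ for pure states, which supplies the lower bound $C_{\mathrm{a},\infty}^{\textsf{QIP}}(\psi^B)\geqq H(\Delta(\psi^B))$. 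So the entire content is the \emph{converse}: $C_{\mathrm{a},\infty}^{\textsf{QIP}}(\psi^B)\leqq H(\Delta(\psi^B))$.

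For the converse I would proceed via a monotone argument. First I would identify a coherence monotone $f$ that (i) does not increase under \textsf{QIP} acting on the $B$-side coherence — more precisely, under the $B$-marginal of any \textsf{QIP} map applied to a QI-compatible state, or better, a quantity defined on $\rho^{AB}$ that is monotone under all of \textsf{QIP} — (ii) is asymptotically continuous, (iii) is additive (or at least weakly additive/subadditive in the right direction) so that it regularizes correctly, and (iv) evaluates to $H(\Delta(\psi^B))$ on the pure input $\Ket{\psi}^{AB}$ and to $\log_2 M$ on the target $\Ket{\Phi_M}\Bra{\Phi_M}^B$ (tensored with anything on $A$). The natural candidate is the relative entropy of QI-ness, $E_r(\rho^{AB}) = \min_{\sigma\in\mathcal{QI}} D(\rho^{AB}\|\sigma)$, or equivalently the $A$-side-optimized relative entropy of coherence of the $B$ system; for a pure state $\Ket{\psi}^{AB}$ this equals $H(\Delta(\psi^B))$ (this is the QI relative entropy formula underlying the result of~\cite{PhysRevLett.116.070402}), and on $\Ket{\Phi_M}\Bra{\Phi_M}^B$ it equals $\log_2 M$. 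Monotonicity of the relative entropy of QI-ness under \textsf{QIP} is immediate from data processing: \textsf{QIP} maps $\mathcal{QI}$ into $\mathcal{QI}$, so $E_r(\mathcal{E}(\rho))\leqq E_r(\rho)$ for $\mathcal{E}\in\textsf{QIP}$. Then the standard resource-theoretic chain — apply the protocol $\mathcal{E}^{AB}\in\textsf{QIP}$ to $(\psi^B)^{\otimes n}$'s purification, use monotonicity to get $E_r\big(\mathcal{E}((\Ket{\psi}\Bra{\psi})^{\otimes n})\big)\leqq E_r((\Ket{\psi}\Bra{\psi})^{\otimes n}) = n\,H(\Delta(\psi^B))$ by additivity of $E_r$ on the $n$-fold pure input, then asymptotic continuity to compare the output with $\Ket{\Phi_M}\Bra{\Phi_M}^B$ up to error $\epsilon$, yielding $\log_2 M \leqq n\,H(\Delta(\psi^B)) + n\,\delta(\epsilon)$ with $\delta(\epsilon)\to 0$ — gives $C_{\mathrm{a},1,\epsilon}^{\textsf{QIP}}((\psi^B)^{\otimes n})/n \leqq H(\Delta(\psi^B))+\delta(\epsilon)$, and taking $n\to\infty$ then $\epsilon\to 0$ gives the converse.

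The main obstacle is establishing the additivity (regularization) of the chosen monotone on the inputs and targets, and making sure the monotone is genuinely a \textsf{QIP} monotone in the bipartite sense rather than just an IO/MIO monotone on $B$ alone. Concretely: (a) I must verify that $E_r(\rho^{AB})$ (relative entropy distance to $\mathcal{QI}$) is additive on tensor powers of pure states — this should follow from the pure-state formula $E_r(\Ket{\psi}\Bra{\psi}^{AB}) = H(\Delta(\psi^B))$ together with $\Delta((\psi^B)^{\otimes n}) = (\Delta(\psi^B))^{\otimes n}$ and additivity of von Neumann entropy, but it needs the pure-state formula to hold, which itself may require an argument (optimizing the QI-state $\sigma = \sum_j q(j)\sigma_j^A\otimes\Ket{j}\Bra{j}^B$ against $\Ket{\psi}\Bra{\psi}^{AB}$); (b) asymptotic continuity of $E_r$ relative to the compact convex set $\mathcal{QI}$ is standard (Fannes-type / Winter's continuity bounds for relative entropy distances to a set containing the maximally mixed state, or at least a full-rank state), but I would cite it carefully; (c) the output of the protocol is a state on $B$ only, so I would pad it as $\tau^A\otimes\Ket{\Phi_M}\Bra{\Phi_M}^B$ or simply note $E_r$ evaluated on any extension with the $B$-part equal to $\Ket{\Phi_M}\Bra{\Phi_M}^B$ is at least $\log_2 M$ since the $B$-marginal already costs that much relative entropy of coherence and QI-states have incoherent $B$-marginal. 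If $E_r$ turns out not to regularize cleanly, the fallback is to use the regularized relative entropy of QI-ness directly as the monotone and invoke its known single-letter value on pure states.
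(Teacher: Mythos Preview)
Your proposal is correct and follows essentially the same route as the paper: both use the QI relative entropy $C_r^{A|B}(\rho^{AB})=\min_{\sigma\in\mathcal{QI}}D(\rho\|\sigma)$ as the monotone, establish its monotonicity under \textsf{QIP} via data processing, invoke its pure-state value $H(\Delta(\psi^B))$ and target value $\log_2 M$ (both from Ref.~\cite{PhysRevLett.116.070402}), and combine additivity on pure tensor powers with asymptotic continuity to close the converse. The paper cites the continuity bound and the two evaluations directly from Ref.~\cite{PhysRevLett.116.070402}, so the concerns you flag in (a)--(c) are all dispatched by citation rather than reproved.
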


\begin{proof}
  It suffices to prove
  \begin{equation}
    \label{eq:qip_c_bound}
    C_{\mathrm{a},\infty}^\textsf{QIP}\left(\psi^{B}\right)\leqq H\left(\Delta\left(\psi^B\right)\right),
  \end{equation}
  since Theorem~\ref{thm:inclusion} yields
  \begin{equation}
    C_{\mathrm{a},\infty}^\textsf{1-LQICC}\left(\psi^{B}\right)\leqq C_{\mathrm{a},\infty}^\textsf{QIP}\left(\psi^{B}\right),
  \end{equation}
  and it is shown in Ref.~\cite{PhysRevLett.116.070402} that
  \begin{equation}
    C_{\mathrm{a},\infty}^\textsf{1-LQICC}\left(\psi^{B}\right)\geqq H\left(\Delta\left(\psi^B\right)\right).
  \end{equation}

  To prove Inequality~\eqref{eq:qip_c_bound}, we use the QI relative entropy~\cite{PhysRevLett.116.070402}
  \begin{equation}
    C_r^{A|B}\left(\rho^{AB}\right)\coloneqq\min_{\sigma^{AB}\in\mathcal{QI}}D\left(\rho^{AB}||\sigma^{AB}\right),
  \end{equation}
  where $D\left(\rho||\sigma\right)\coloneqq\tr\rho\log_2\rho-\tr\rho\log_2\sigma$ is the quantum relative entropy.
  The QI relative entropy is monotonically nonincreasing under QI-preserving maps because for any $\mathcal{E}^{AB}\in\mathsf{QIP}$ and $\rho^{AB}$, we have
  \begin{align}
    &C_r^{A|B}\left(\rho^{AB}\right)\\
    \label{eq:qi_relative}
    &=\min_{\sigma^{AB}\in\mathcal{QI}}D\left(\rho^{AB}||\sigma^{AB}\right)\\
    \label{eq:min}
    &= D\left(\rho^{AB}||\sigma_{\min}^{AB}\right)\\
    &\geqq D\left(\mathcal{E}^{AB}\left(\rho^{AB}\right)||\mathcal{E}^{AB}\left(\sigma_{\min}^{AB}\right)\right)\\
    &\geqq\min_{\sigma^{AB}\in\mathcal{QI}}D\left(\mathcal{E}\left(\rho^{AB}\right)||\sigma^{AB}\right)\\
    &=C_r^{A|B}\left(\mathcal{E}^{AB}\left(\rho^{AB}\right)\right),
  \end{align}
  where $\sigma_{\min}^{AB}\in\mathcal{QI}$ in~\eqref{eq:min} is a state achieving the minimum in~\eqref{eq:qi_relative}.

  Consider any QI-preserving map $\mathcal{E}^{AB\to B}$ that achieves assisted distillation of coherence as follows
  \begin{equation}
    {\left\|\mathcal{E}^{AB\to B}\left({\Ket{\psi}\Bra{\psi}^{\otimes n}}^{AB}\right)-\Ket{\Phi_M}\Bra{\Phi_M}^B\right\|}_1\leqq\epsilon,
  \end{equation}
  where the output of $\mathcal{E}^{AB\to B}$ on $A$ is traced out.
  Reference~\cite{PhysRevLett.116.070402} also shows the continuity of the QI relative entropy in the sense that
  for any $\rho^{B}$ and $\sigma^{B}$ where
  \begin{equation}
    T\coloneqq{\left\|\rho^{B}-\sigma^{B}\right\|}_1<1,
  \end{equation}
  it holds that
  \begin{equation}
    \left|C_r^{A|B}\left(\rho^{B}\right)-C_r^{A|B}\left(\sigma^{B}\right)\right|\leqq 2T\log_2\dim\mathcal{H}^B+2h\left(T\right),
  \end{equation}
  where $\mathcal{H}^B$ is the Hilbert space for $\rho^B$ and $\sigma^B$, and $h\left(x\right)\coloneqq -x\log_2 x-\left(1-x\right)\log_2\left(1-x\right)$ is the binary entropy.
  Using this continuity in the case of $0<\epsilon\leqq\frac{1}{2}$,
  we obtain
  \begin{equation}
    \begin{split}
      &C_r^{A|B}\left(\mathcal{E}^{AB\to B}\left({\Ket{\psi}\Bra{\psi}^{\otimes n}}^{AB}\right)\right)\\
      &\geqq C_r^{A|B}\left(\Ket{\Phi_M}\Bra{\Phi_M}^B\right)-2n\epsilon\log_2 d-2h\left(\epsilon\right)\\
    \end{split}
  \end{equation}
  where $d^n$ is the dimension of the system $B$.
  Since we have~\cite{PhysRevLett.116.070402}
  \begin{align}
    C_r^{A|B}\left(\Ket{\psi}\Bra{\psi}^{AB}\right)&=H\left(\Delta\left(\psi^B\right)\right),\\
    C_r^{A|B}\left(\Ket{\Phi_M}\Bra{\Phi_M}^{B}\right)&=\log_2 M,
  \end{align}
  we obtain
  \begin{align}
    &nH\left(\Delta\left(\psi^B\right)\right)\\
    &\geqq C_r^{A|B}\left(\mathcal{E}^{AB\to B}\left({\left(\Ket{\psi}\Bra{\psi}^{\otimes n}\right)}^{AB}\right)\right)\\
    &\geqq\log_2 M-2n\epsilon\log_2 d-2h\left(\epsilon\right),
  \end{align}
  Thus, for any $\epsilon$ satisfying $0<\epsilon\leqq\frac{1}{2}$, it is necessary that the coherence of assistance under any QI-preserving map should satisfy
  \begin{equation}
    \begin{split}
    &\frac{\log_2 M}{n}=H\left(\Delta\left(\psi^B\right)\right)+2\epsilon\log_2 d + O\left(\frac{1}{n}\right)\\
    &\text{as }n\to\infty,
    \end{split}
  \end{equation}
  which yields Inequality~\eqref{eq:qip_c_bound}.
\end{proof}

In contrast with this coincidence $C_{\mathrm{a},\infty}^{\textsf{1-LQICC}}\left(\psi^{B}\right)=C_{\mathrm{a},\infty}^{\textsf{QIP}}\left(\psi^{B}\right)$ in the asymptotic scenario for any pure state $\Ket{\psi}^{AB}$, we here identify an instance of $\Ket{\psi}^{AB}$ in the one-shot scenario that shows the separation $C_{\mathrm{a},1,\epsilon}^{\textsf{1-LQICC}}\left(\psi^{AB}\right)<C_{\mathrm{a},1,\epsilon}^{\textsf{QIP}}\left(\psi^{AB}\right)$ for some $\epsilon$.
To identify such an instance, we need a tight upper bound of  $C_{\mathrm{a},1,\epsilon}^{\textsf{1-LQICC}}\left(\psi^{AB}\right)$ and at the same time, a protocol by \textsf{QIP} that can outperform this upper bound.
While some general upper bounds of $C_{\mathrm{a},1,\epsilon}^{\textsf{1-LQICC}}\left(\psi^{AB}\right)$ are given in Refs.~\cite{M1,R1}, the difficulty in showing the instance arises from the fact that these bounds are not tight enough to provide an optimal bound under $1$-\textsf{LQICC}, and hence it is unclear whether one can construct a \textsf{QIP} protocol outperforming this bound. 
Rather than the general bounds, we here need a tighter bound for showing a particular example.

One possible candidate of the instance is a counterexample of the tightness of the general upper bounds shown in Refs.~\cite{M1,R1}.
The counterexample exists when $D_B=\dim\mathcal{H}^B=4$, as shown in the following.
Choose $t\in\left(0,\frac{1}{2}\right)$ and $\omega\coloneqq\frac{1}{2}\left(-1+\textup{i}\sqrt{3}\right)$.
Define
\begin{align}
\Ket{u}&\coloneqq\frac{1}{\sqrt{3}}\left(\begin{array}{c}
    \omega\\
    \omega^2\\
    1\\
    0
\end{array}\right),\\
\Ket{v\left(t\right)}&\coloneqq\left(\begin{array}{c}
        t\\
        t\\
        t\\
        \sqrt{1-3t^2}
    \end{array}\right).
\end{align}
Consider a bipartite pure state
\begin{equation}
\label{eq:example}
\begin{split}
  &\Ket{\psi}^{AB}\coloneqq\\
  &\sqrt{1-\frac{1}{4-12t^2}}\Ket{0}^A\otimes\Ket{u}^B+\sqrt{\frac{1}{4-12t^2}}\Ket{1}^A\otimes\Ket{v\left(t\right)}^B.
\end{split}
\end{equation}
It is shown in Refs.~\cite{R1,C3} that this state cannot be transformed into the maximally coherent state of coherence rank $4$ by any one-way LQICC, while the general upper bound of one-shot assisted coherence with assistance is as loose as $\log_2 4=2$ as $\epsilon\to 0$.
This no-go theorem is based on the fact that $\psi^B$ is a rank-$2$ extremal point of positive semidefinite operators  on $\mathbb{C}^4$ satisfying $\Delta\left(\psi^B\right)=\frac{\openone}{4}$, as shown in Example~4 in Ref.~\cite{C3}.
However, it has been unknown whether there exists a protocol for assisted distillation of coherence that uses a more powerful class of operations than one-way LQICC to outperform the bound of this no-go theorem for this choice of $\Ket{\psi}^{AB}$.

For the state~\eqref{eq:example},
we prove the following separation of one-shot distillable coherence with assistance, using SDP-based numerical calculation as a tool for evaluating the bound for the separation.
Remark that this separation does not survive in the asymptotic scenario as shown in Proposition~\ref{prp:asymptotic}.

\begin{proposition}
[\label{prp:fidelity}Separation between $\mathsf{QIP}\cap\mathsf{PPT}$ and \textsf{QIP} in one-shot assisted distillation of coherence]
  For the quantum state $\Ket{\psi}^{AB}$ defined as~\eqref{eq:example} where we set the parameter $t=1/4$,
  it holds that
  \begin{align}
  \label{eq:qip_bound}
    \max_{\mathcal{E}\in\mathsf{QIP}}F^2\left(\Ket{\Phi_4}\Bra{\Phi_4}^B,\mathcal{E}^{AB\to B}\left(\Ket{\psi}\Bra{\psi}^{AB}\right)\right)&=1,\\
  \label{eq:qip_ppt_bound}
    \max_{\mathcal{E}\in\mathsf{QIP}\cap\mathsf{PPT}}F^2\left(\Ket{\Phi_4}\Bra{\Phi_4}^B,\mathcal{E}^{AB\to B}\left(\Ket{\psi}\Bra{\psi}^{AB}\right)\right)&<0.98,
  \end{align}
  where $\Ket{\Phi_4}$ is the maximally coherent state of coherence rank $4$, and $F^2\left(\rho,\sigma\right)\coloneqq\left\|\sqrt{\rho}\sqrt{\sigma}\right\|_1^2$ is the fidelity. As a result, for some nonzero error $\epsilon>0$, there exists a separation between $\mathsf{QIP}\cap\mathsf{PPT}$ and \textsf{QIP} in one-shot distillable coherence with assistance 
  \begin{equation}
  \label{eq:separation_qip_ppt}
      C_{\mathrm{a},1,\epsilon}^{\mathsf{QIP}\cap\mathsf{PPT}}\left(\psi^{B}\right)<C_{\mathrm{a},1,\epsilon}^{\mathsf{QIP}}\left(\psi^{B}\right)=2.
  \end{equation}
\end{proposition}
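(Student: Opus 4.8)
The plan is to reduce both maximizations in~\eqref{eq:qip_bound} and~\eqref{eq:qip_ppt_bound} to semidefinite programs and evaluate them. By the argument behind Proposition~\ref{prp:qi_preserving}, a channel $\mathcal{E}^{AB\to B}$ is QI-preserving if and only if $\mathcal{E}^{AB\to B}\left(\rho_{a,b}^A\otimes\Ket{j}\Bra{j}^B\right)$ is diagonal on $B$ for every $a,b\in\left\{0,1\right\}$ and $j\in\left\{0,1,2,3\right\}$; each such condition, like trace preservation, is a linear constraint on the Choi operator $J(\mathcal{E})$, and complete positivity is the constraint $J(\mathcal{E})\geqq 0$, so the set of QIP channels $AB\to B$ is SDP-representable. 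Adjoining the PPT condition~\eqref{eq:ppt_sdp} yields $\mathsf{QIP}\cap\mathsf{PPT}$. Since $\Ket{\Phi_4}$ is pure, the objective equals $\Bra{\Phi_4}\mathcal{E}^{AB\to B}\left(\Ket{\psi}\Bra{\psi}^{AB}\right)\Ket{\Phi_4}$, which is linear in $J(\mathcal{E})$. Hence both quantities are optima of genuine SDPs over a compact feasible set, so the maxima are attained.

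For~\eqref{eq:qip_bound} I would use that $F^2\leqq 1$ always, so it suffices to exhibit one feasible $\mathcal{E}\in\mathsf{QIP}$ with $\mathcal{E}^{AB\to B}\left(\Ket{\psi}\Bra{\psi}^{AB}\right)=\Ket{\Phi_4}\Bra{\Phi_4}^B$ exactly. I would produce such a map directly: write down a candidate channel $AB\to B$ whose $B$-marginal on the specific input $\Ket{\psi}^{AB}$ is $\Ket{\Phi_4}^B$, and check that it sends each of the $16$ test states $\rho_{a,b}^A\otimes\Ket{j}\Bra{j}^B$ to a diagonal operator, which by Proposition~\ref{prp:qi_preserving} is precisely QI-preservation. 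Equivalently, one lets the SDP of the previous paragraph return its optimal Choi operator and verifies its feasibility and objective exactly. The value $1$ is moreover consistent with the monotonicity of the QI relative entropy recorded in the proof of Proposition~\ref{prp:asymptotic}, since $C_r^{A|B}\left(\Ket{\psi}\Bra{\psi}^{AB}\right)=H\left(\Delta\left(\psi^B\right)\right)=\log_2 4=C_r^{A|B}\left(\Ket{\Phi_4}\Bra{\Phi_4}^B\right)$, so no monotone obstructs the exact distillation.

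For~\eqref{eq:qip_ppt_bound} the content is genuinely numerical: solve the SDP over Choi operators subject to the QIP constraints and the extra PPT constraint~\eqref{eq:ppt_sdp}, for the state~\eqref{eq:example} with $t=1/4$, and read off that the optimum lies below $0.98$. For rigour I would not rely on the solver output alone but exhibit an explicit feasible point of the dual SDP whose value is strictly below $0.98$, turning the numerical bound into a checkable Hermitian certificate. The role of the choice $t=1/4$, drawn from the $D_B=4$ family of Refs.~\cite{R1,C3}, is to make this gap comfortably larger than the numerical precision.

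Finally I would convert the two fidelity statements into~\eqref{eq:separation_qip_ppt}. The exact transformation of~\eqref{eq:qip_bound} meets the $\epsilon$-error constraint with coherence rank $M=4$ for every $\epsilon\geqq 0$, so $C_{\mathrm{a},1,\epsilon}^{\mathsf{QIP}}\left(\psi^B\right)\geqq 2$; combined with $\log_2 M\leqq C_r^{A|B}\left(\Ket{\psi}\Bra{\psi}^{AB}\right)+o(1)=2+o(1)$ from the monotonicity and continuity of $C_r^{A|B}$ (both in the proof of Proposition~\ref{prp:asymptotic}) together with $M\in\mathbb{N}$, this forces $C_{\mathrm{a},1,\epsilon}^{\mathsf{QIP}}\left(\psi^B\right)=2$ for small $\epsilon$. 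On the other side, by the Fuchs--van de Graaf inequality any $\mathcal{E}\in\mathsf{QIP}\cap\mathsf{PPT}$ has $\left\|\mathcal{E}^{AB\to B}\left(\Ket{\psi}\Bra{\psi}^{AB}\right)-\Ket{\Phi_4}\Bra{\Phi_4}^B\right\|_1\geqq 2\left(1-F\right)>2\left(1-\sqrt{0.98}\right)$, so for any $\epsilon<2\left(1-\sqrt{0.98}\right)$ the rank $M=4$ is unattainable under $\mathsf{QIP}\cap\mathsf{PPT}$, whence $C_{\mathrm{a},1,\epsilon}^{\mathsf{QIP}\cap\mathsf{PPT}}\left(\psi^B\right)\leqq\log_2 3<2$; any such $\epsilon$ witnesses~\eqref{eq:separation_qip_ppt}, which a fortiori separates \textsf{1-LQICC} from \textsf{QIP} since $\textsf{1-LQICC}\subseteq\mathsf{QIP}\cap\mathsf{PPT}$. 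The main obstacle is twofold: making the computation of~\eqref{eq:qip_ppt_bound} into a rigorous certificate via an explicit dual feasible point, and — if one wants a self-contained proof of~\eqref{eq:qip_bound} rather than a verified numerical optimum — writing down the exact QIP map, which is nontrivial because the natural LQICC-type protocols (a measurement on $A$ followed by an incoherent, or even maximally incoherent, channel on $B$) leave $B$ in a rank-$\leqq 2$ state whose relative entropy of coherence is strictly below $\log_2 4$, so the witness must exploit the full global generality of QIP maps rather than any such layered structure.
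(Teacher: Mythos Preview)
Your proposal is correct and follows essentially the same approach as the paper: both cast~\eqref{eq:qip_bound} and~\eqref{eq:qip_ppt_bound} as SDPs over the Choi operator using the linear QIP constraints from Proposition~\ref{prp:qi_preserving} together with the PPT constraint~\eqref{eq:ppt_sdp}, and then evaluate numerically with the duality gap certifying precision. Your write-up is in fact more explicit than the paper's own proof, in that you spell out the Fuchs--van de Graaf conversion to~\eqref{eq:separation_qip_ppt} and propose extracting an explicit dual-feasible certificate, both of which the paper leaves implicit.
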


\begin{proof}
We calculate~\eqref{eq:qip_bound} and~\eqref{eq:qip_ppt_bound} by SDP\@.
To calculate~\eqref{eq:qip_bound}, based on Proposition~\ref{prp:qi_preserving},  maximize
\begin{equation}
    \tr\left[\left(\Ket{\Phi_4}\Bra{\Phi_4}^{B^\prime}\otimes{\left(\Ket{\psi}\Bra{\psi}^{A B}\right)}^\mathrm{T}\right){J\left(\mathcal{E}\right)}^{B^\prime A B}\right]
\end{equation}
subject to
\begin{align}
\label{eq:CP}
&{J\left(\mathcal{E}\right)}^{B^\prime A B}\geqq 0,\\
\label{eq:TP}
&\tr_{B^\prime}{J\left(\mathcal{E}\right)}^{B^\prime A B}=\openone^{AB},\\
\label{eq:def_sigma}
&\sigma_{a,b,j}^{B^\prime}=\tr_{AB}\left[\left(\openone^{B^\prime}\otimes{\left(\rho_{a,b}^A\otimes\Ket{j}\Bra{j}^B\right)}^\mathrm{T}\right){J\left(\mathcal{E}\right)}^{B^\prime A B}\right],\\
\label{eq:condition_sigma}
&\Delta\left(\sigma_{a,b,j}^{B^\prime}\right)=\sigma_{a,b,j}^{B^\prime},\,\forall a,b,j.
\end{align}
where ${\left(\Ket{\psi}\Bra{\psi}^{A B}\right)}^\mathrm{T}$ is the transpose of $\Ket{\psi}\Bra{\psi}^{A B}$ with respect to the reference basis, $\Ket{\Phi_4}$ is the maximally coherent state of coherence rank $4$, $J\left(\mathcal{E}\right)$ is the Choi operator~\cite{Watrous:2018:TQI:3240076} of $\mathcal{E}$, $\openone$ is the identity operator, and $\Delta$ is the completely dephasing channel.
Note that~\eqref{eq:CP} and~\eqref{eq:TP} represent the completely positive and trace-preserving properties respectively~\cite{Watrous:2018:TQI:3240076},~\eqref{eq:def_sigma} is obtained from Proposition~\ref{prp:qi_preserving}, and~\eqref{eq:condition_sigma} is by definition of incoherent states. 
As for~\eqref{eq:qip_ppt_bound}, in the same say,  maximize
\begin{equation}
    \tr\left[\left(\Ket{\Phi_4}\Bra{\Phi_4}^{B^\prime}\otimes{\left(\Ket{\psi}\Bra{\psi}^{A B}\right)}^\mathrm{T}\right){J\left(\mathcal{E}\right)}^{B^\prime A B}\right]
\end{equation}
subject to
\begin{align}
&{J\left(\mathcal{E}\right)}^{B^\prime A B}\geqq 0,\\
&\tr_{B^\prime}{J\left(\mathcal{E}\right)}^{B^\prime A B}=\openone^{AB},\\
&\sigma_{a,b,j}^{B^\prime}=\tr_{AB}\left[\left(\openone^{B^\prime}\otimes{\left(\rho_{a,b}^A\otimes\Ket{j}\Bra{j}^B\right)}^\mathrm{T}\right){J\left(\mathcal{E}\right)}^{B^\prime A B}\right],\\
&\Delta\left(\sigma_{a,b,j}^{B^\prime}\right)=\sigma^{B^\prime},\,\forall a,b,j\\
\label{eq:ppt_condition}
&{\left({J\left(\mathcal{E}\right)}^{B^\prime A B}\right)}^{\mathrm{T}_A}\geqq 0,,
\end{align}
where ${\left({J\left(\mathcal{E}\right)}^{B^\prime A B}\right)}^{\mathrm{T}_A}$ is the partial transpose of ${J\left(\mathcal{E}\right)}^{B^\prime A B}$ on $A$ with respect to the reference basis.
Note that~\eqref{eq:ppt_condition} represents the condition of PPT maps~\cite{Rains2000}.
Since these optimizations are SDP~\cite{Watrous:2018:TQI:3240076}, we numerically calculate~\eqref{eq:qip_bound} and~\eqref{eq:qip_ppt_bound} using YALMIP~\cite{L} and Splitting Conic Solver (SCS)~\cite{O} with the sufficient precision to show the separation, where the precision of the output solution compared to the optimal solution is checked in the numerical algorithm by the duality of SDP\@.
\end{proof}

\section{\label{sec:conclusion}Conclusion}

We have investigated the power of different classes of operations in the manipulation of coherence and entanglement in distributed settings, in particular, in a quantum-incoherent (QI) setting and an incoherent-incoherent setting as illustrated in Fig.~\ref{fig:introduction}.
In the quantum-incoherent setting,
we consider and analyze QI-preserving maps \textsf{QIP}, that is, the maximal set of free operations.
Unlike the entanglement theory where separability-preserving maps \textsf{SEPP} and separable maps \textsf{SEP} (\textit{i.e.}, completely \textsf{SEPP}) are different, we prove that the corresponding maps in the distributed manipulation of coherence and entanglement, that is, \textsf{QIP} and the completely QI-preserving maps \textsf{CQIP}, are the same set of maps.
As for the incoherent-incoherent setting, maximally incoherent operations \textsf{MIO} serve as the maximal set of free operations.
In contrast with previously known classes of operations in these settings that have the inclusion relations shown in~\eqref{eq:inclusion_relation},
we show that no inclusion relation holds between \textsf{QIP} and \textsf{MIO}.
These results highlight the difference between the entanglement theory and the resource theory of distributed manipulation of coherence and entanglement investigated in this paper.

Using these operations, we establish a hierarchy of the classes of free operations in the quantum-incoherent setting and that in the incoherent-incoherent setting.
We introduce different classes of operations by starting with combinations of the smallest sets of maps (\textsf{LOCC} and \textsf{IO}), building up to larger ones by moving up the entanglement and coherence hierarchy (to \textsf{PPT} and \textsf{MIO}), and considering the set intersections.
We show the separation among these operations for distributed manipulation of coherence and entanglement as shown in Fig.~\ref{fig:hierarchy}.
In contrast with previously known classes of operations, some of the operations in Fig.~\ref{fig:hierarchy} that we have introduced can be used for numerical algorithms based on semidefinite programming (SDP).
Assisted by SDP, we have discovered that in the task of one-shot assisted distillation of coherence, the hierarchy collapses in terms of asymptotic distillable coherence of assistance, but there still exists a non-zero separation between \textsf{QIP} and $\mathsf{QIP}\cap\mathsf{PPT}$ in the corresponding one-shot task.
Our results clarify tasks where the differences of these operations for manipulating coherence and entanglement appear, and at the same time demonstrate the task where the asymptotically non-surviving separation of the hierarchy arises.

Finally, towards further understandings of the structure of distributed manipulation of  coherence and entanglement, we pose some questions that we believe are interesting to investigate in future research.
While we have discussed the subtleties of taking intersection of two classes of operations to introduce another class of operations in Remark~\ref{remark:op_int}, a natural question to ask in this context is whether the operational intersection of $\mathsf{LOCC}$ and $\mathsf{IO}$ is the same as $ \mathsf{LICC}$ or not, which is a refinement of an open question raised in Ref.~\cite{S2}. 
Apart from the operations in the hierarchy in Fig.~\ref{fig:hierarchy}, it is interesting to see whether we can define other physically motivated or numerically tractable classes of operations that can be situated in the hierarchies, especially without taking intersection of existing operations.
Regarding the asymptotically non-surviving separation of the hierarchy, it is still open whether we can demonstrate an asymptotically non-surviving separation between more experimentally tractable classes of operations than QIP, especially $r$-round LQICC/LICC and $(r+1)$-round LQICC/LICC for some $r$ in a certain task.
Our results provide fundamental techniques for tackling these types of questions, opening the way to further understandings of interplay between quantum coherence and quantum entanglement as quantum resources.

\acknowledgments{This work was supported by CREST (Japan Science and Technology Agency) JPMJCR1671 and Cross-ministerial Strategic Innovation Promotion Program (SIP) (Council for Science, Technology and Innovation (CSTI)). We acknowledge Yoshifumi Nakata for holding a workshop and inviting us to work together.}

\bibliographystyle{plainnat}
\bibliography{citation}

\begin{thebibliography}{79}
\providecommand{\natexlab}[1]{#1}
\providecommand{\url}[1]{\texttt{#1}}
\expandafter\ifx\csname urlstyle\endcsname\relax
  \providecommand{\doi}[1]{doi: #1}\else
  \providecommand{\doi}{doi: \begingroup \urlstyle{rm}\Url}\fi

\bibitem[Anshu et~al.(2018{\natexlab{a}})Anshu, Hsieh, and
  Jain]{PhysRevLett.121.190504}
Anurag Anshu, Min-Hsiu Hsieh, and Rahul Jain.
\newblock Quantifying resources in general resource theory with catalysts.
\newblock \emph{Phys. Rev. Lett.}, 121:\penalty0 190504, Nov
  2018{\natexlab{a}}.
\newblock \doi{10.1103/PhysRevLett.121.190504}.
\newblock URL \url{https://link.aps.org/doi/10.1103/PhysRevLett.121.190504}.

\bibitem[Anshu et~al.(2018{\natexlab{b}})Anshu, Jain, and Streltsov]{A11}
Anurag Anshu, Rahul Jain, and Alexander Streltsov.
\newblock Quantum state redistribution with local coherence.
\newblock arXiv:1804.04915, Apr 2018{\natexlab{b}}.
\newblock URL \url{https://arxiv.org/abs/1804.04915}.

\bibitem[{Bandyopadhyay} et~al.(2015){Bandyopadhyay}, {Cosentino}, {Johnston},
  {Russo}, {Watrous}, and {Yu}]{B1}
S.~{Bandyopadhyay}, A.~{Cosentino}, N.~{Johnston}, V.~{Russo}, J.~{Watrous},
  and N.~{Yu}.
\newblock Limitations on separable measurements by convex optimization.
\newblock \emph{IEEE Transactions on Information Theory}, 61\penalty0
  (6):\penalty0 3593--3604, June 2015.
\newblock \doi{10.1109/TIT.2015.2417755}.
\newblock URL \url{https://ieeexplore.ieee.org/document/7086052}.

\bibitem[Baumgratz et~al.(2014)Baumgratz, Cramer, and
  Plenio]{PhysRevLett.113.140401}
T.~Baumgratz, M.~Cramer, and M.~B. Plenio.
\newblock Quantifying coherence.
\newblock \emph{Phys. Rev. Lett.}, 113:\penalty0 140401, Sep 2014.
\newblock \doi{10.1103/PhysRevLett.113.140401}.
\newblock URL \url{https://link.aps.org/doi/10.1103/PhysRevLett.113.140401}.

\bibitem[Bennett et~al.(1993)Bennett, Brassard, Cr\'{e}peau, Jozsa, Peres, and
  Wootters]{B5}
Charles~H. Bennett, Gilles Brassard, Claude Cr\'{e}peau, Richard Jozsa, Asher
  Peres, and William~K. Wootters.
\newblock Teleporting an unknown quantum state via dual classical and
  einstein-podolsky-rosen channels.
\newblock \emph{Phys. Rev. Lett.}, 70\penalty0 (13):\penalty0 1895--1899, Mar
  1993.
\newblock \doi{10.1103/PhysRevLett.70.1895}.
\newblock URL \url{https://link.aps.org/doi/10.1103/PhysRevLett.70.1895}.

\bibitem[Bennett et~al.(1999)Bennett, DiVincenzo, Fuchs, Mor, Rains, Shor,
  Smolin, and Wootters]{B6}
Charles~H. Bennett, David~P. DiVincenzo, Christopher~A. Fuchs, Tal Mor, Eric
  Rains, Peter~W. Shor, John~A. Smolin, and William~K. Wootters.
\newblock Quantum nonlocality without entanglement.
\newblock \emph{Phys. Rev. A}, 59:\penalty0 1070--1091, Feb 1999.
\newblock \doi{10.1103/PhysRevA.59.1070}.
\newblock URL \url{https://link.aps.org/doi/10.1103/PhysRevA.59.1070}.

\bibitem[Biswas et~al.(2017)Biswas, Garc\'{i}a~D\'{i}az, and
  Winter]{doi:10.1098/rspa.2017.0170}
Tanmoy Biswas, Mar\'{i}a Garc\'{i}a~D\'{i}az, and Andreas Winter.
\newblock Interferometric visibility and coherence.
\newblock \emph{Proceedings of the Royal Society A: Mathematical, Physical and
  Engineering Sciences}, 473\penalty0 (2203):\penalty0 20170170, 2017.
\newblock \doi{10.1098/rspa.2017.0170}.
\newblock URL
  \url{https://royalsocietypublishing.org/doi/abs/10.1098/rspa.2017.0170}.

\bibitem[{Brandao} and {Datta}(2011)]{5714245}
F.~G. S.~L. {Brandao} and N.~{Datta}.
\newblock One-shot rates for entanglement manipulation under non-entangling
  maps.
\newblock \emph{IEEE Transactions on Information Theory}, 57\penalty0
  (3):\penalty0 1754--1760, March 2011.
\newblock \doi{10.1109/TIT.2011.2104531}.
\newblock URL \url{https://ieeexplore.ieee.org/abstract/document/5714245}.

\bibitem[Chakraborty et~al.(2019)Chakraborty, Dey, Das, Chattopadhyay, Bhar,
  and Sarkar]{PhysRevA.100.052316}
Dipayan Chakraborty, Prabir~Kumar Dey, Nabendu Das, Indrani Chattopadhyay, Amit
  Bhar, and Debasis Sarkar.
\newblock Necessary and sufficient condition for the equivalence of two pure
  multipartite states under stochastic local incoherent operations and
  classical communications.
\newblock \emph{Phys. Rev. A}, 100:\penalty0 052316, Nov 2019.
\newblock \doi{10.1103/PhysRevA.100.052316}.
\newblock URL \url{https://link.aps.org/doi/10.1103/PhysRevA.100.052316}.

\bibitem[Chitambar et~al.(2016)Chitambar, Streltsov, Rana, Bera, Adesso, and
  Lewenstein]{PhysRevLett.116.070402}
E.~Chitambar, A.~Streltsov, S.~Rana, M.~N. Bera, G.~Adesso, and M.~Lewenstein.
\newblock Assisted distillation of quantum coherence.
\newblock \emph{Phys. Rev. Lett.}, 116:\penalty0 070402, Feb 2016.
\newblock \doi{10.1103/PhysRevLett.116.070402}.
\newblock URL \url{https://link.aps.org/doi/10.1103/PhysRevLett.116.070402}.

\bibitem[Chitambar(2018)]{PhysRevA.97.050301}
Eric Chitambar.
\newblock Dephasing-covariant operations enable asymptotic reversibility of
  quantum resources.
\newblock \emph{Phys. Rev. A}, 97:\penalty0 050301, May 2018.
\newblock \doi{10.1103/PhysRevA.97.050301}.
\newblock URL \url{https://link.aps.org/doi/10.1103/PhysRevA.97.050301}.

\bibitem[Chitambar and Gour(2016{\natexlab{a}})]{PhysRevA.94.052336}
Eric Chitambar and Gilad Gour.
\newblock Comparison of incoherent operations and measures of coherence.
\newblock \emph{Phys. Rev. A}, 94:\penalty0 052336, Nov 2016{\natexlab{a}}.
\newblock \doi{10.1103/PhysRevA.94.052336}.
\newblock URL \url{https://link.aps.org/doi/10.1103/PhysRevA.94.052336}.

\bibitem[Chitambar and Gour(2016{\natexlab{b}})]{PhysRevLett.117.030401}
Eric Chitambar and Gilad Gour.
\newblock Critical examination of incoherent operations and a physically
  consistent resource theory of quantum coherence.
\newblock \emph{Phys. Rev. Lett.}, 117:\penalty0 030401, Jul
  2016{\natexlab{b}}.
\newblock \doi{10.1103/PhysRevLett.117.030401}.
\newblock URL \url{https://link.aps.org/doi/10.1103/PhysRevLett.117.030401}.

\bibitem[Chitambar and Gour(2019)]{RevModPhys.91.025001}
Eric Chitambar and Gilad Gour.
\newblock Quantum resource theories.
\newblock \emph{Rev. Mod. Phys.}, 91:\penalty0 025001, Apr 2019.
\newblock \doi{10.1103/RevModPhys.91.025001}.
\newblock URL \url{https://link.aps.org/doi/10.1103/RevModPhys.91.025001}.

\bibitem[Chitambar and Hsieh(2014)]{CH14}
Eric Chitambar and Min-Hsiu Hsieh.
\newblock Asymptotic state discrimination and a strict hierarchy in
  distinguishability norms.
\newblock \emph{Journal of Mathematical Physics}, 55\penalty0 (11):\penalty0
  112204, 2014.
\newblock \doi{10.1063/1.4902027}.
\newblock URL
  \url{http://scitation.aip.org/content/aip/journal/jmp/55/11/10.1063/1.4902027}.

\bibitem[Chitambar and Hsieh(2016)]{PhysRevLett.117.020402}
Eric Chitambar and Min-Hsiu Hsieh.
\newblock Relating the resource theories of entanglement and quantum coherence.
\newblock \emph{Phys. Rev. Lett.}, 117:\penalty0 020402, Jul 2016.
\newblock \doi{10.1103/PhysRevLett.117.020402}.
\newblock URL \url{https://link.aps.org/doi/10.1103/PhysRevLett.117.020402}.

\bibitem[Chitambar and Hsieh(2017)]{E1}
Eric Chitambar and Min-Hsiu Hsieh.
\newblock Round complexity in the local transformations of quantum and
  classical states.
\newblock \emph{Nature communications}, 8\penalty0 (1):\penalty0 2086, 2017.
\newblock \doi{10.1038/s41467-017-01887-5}.
\newblock URL \url{https://www.nature.com/articles/s41467-017-01887-5}.

\bibitem[Chitambar et~al.(2014{\natexlab{a}})Chitambar, Duan, and
  Hsieh]{6687245}
Eric Chitambar, Runyao Duan, and Min-Hsiu Hsieh.
\newblock When do local operations and classical communication suffice for
  two-qubit state discrimination?
\newblock \emph{IEEE Transactions on Information Theory}, 60\penalty0
  (3):\penalty0 1549--1561, 2014{\natexlab{a}}.
\newblock \doi{10.1109/TIT.2013.2295356}.
\newblock URL \url{https://ieeexplore.ieee.org/document/6687245}.

\bibitem[Chitambar et~al.(2014{\natexlab{b}})Chitambar, Leung, Man{\v{c}}inska,
  Ozols, and Winter]{Chitambar2014}
Eric Chitambar, Debbie Leung, Laura Man{\v{c}}inska, Maris Ozols, and Andreas
  Winter.
\newblock Everything you always wanted to know about locc (but were afraid to
  ask).
\newblock \emph{Communications in Mathematical Physics}, 328\penalty0
  (1):\penalty0 303--326, May 2014{\natexlab{b}}.
\newblock \doi{10.1007/s00220-014-1953-9}.
\newblock URL
  \url{https://link.springer.com/article/10.1007/s00220-014-1953-9}.

\bibitem[Chitambar et~al.(2020)Chitambar, de~Vicente, Girard, and
  Gour]{Chitambar2017}
Eric Chitambar, Julio~I. de~Vicente, Mark~W. Girard, and Gilad Gour.
\newblock Entanglement manipulation beyond local operations and classical
  communication.
\newblock \emph{Journal of Mathematical Physics}, 61\penalty0 (4):\penalty0
  042201, 2020.
\newblock \doi{10.1063/1.5124109}.
\newblock URL \url{https://aip.scitation.org/doi/10.1063/1.5124109}.

\bibitem[Christensen and Vesterstr{\o}m(1979)]{C3}
J.~P.~R. Christensen and J.~Vesterstr{\o}m.
\newblock A note on extreme positive definite matrices.
\newblock \emph{Mathematische Annalen}, 244\penalty0 (1):\penalty0 65--68, Feb
  1979.
\newblock \doi{10.1007/BF01420337}.
\newblock URL \url{https://link.springer.com/article/10.1007%2FBF01420337}.

\bibitem[Cosentino(2013)]{C4}
Alessandro Cosentino.
\newblock Positive-partial-transpose-indistinguishable states via semidefinite
  programming.
\newblock \emph{Phys. Rev. A}, 87:\penalty0 012321, Jan 2013.
\newblock \doi{10.1103/PhysRevA.87.012321}.
\newblock URL \url{https://link.aps.org/doi/10.1103/PhysRevA.87.012321}.

\bibitem[Donald et~al.(2002)Donald, Horodecki, and
  Rudolph]{doi:10.1063/1.1495917}
Matthew~J. Donald, Michał Horodecki, and Oliver Rudolph.
\newblock The uniqueness theorem for entanglement measures.
\newblock \emph{Journal of Mathematical Physics}, 43\penalty0 (9):\penalty0
  4252--4272, 2002.
\newblock \doi{10.1063/1.1495917}.
\newblock URL \url{https://aip.scitation.org/doi/abs/10.1063/1.1495917}.

\bibitem[Egloff et~al.(2018)Egloff, Matera, Theurer, and
  Plenio]{PhysRevX.8.031005}
Dario Egloff, Juan~M. Matera, Thomas Theurer, and Martin~B. Plenio.
\newblock Of local operations and physical wires.
\newblock \emph{Phys. Rev. X}, 8:\penalty0 031005, Jul 2018.
\newblock \doi{10.1103/PhysRevX.8.031005}.
\newblock URL \url{https://link.aps.org/doi/10.1103/PhysRevX.8.031005}.

\bibitem[Eltschka and Siewert(2014)]{E5}
Christopher Eltschka and Jens Siewert.
\newblock Quantifying entanglement resources.
\newblock \emph{J. Phys. A}, 47\penalty0 (42):\penalty0 424005, Oct 2014.
\newblock \doi{10.1088/1751-8113/47/42/424005}.
\newblock URL
  \url{http://iopscience.iop.org/article/10.1088/1751-8113/47/42/424005}.

\bibitem[{Fang} et~al.(2019){Fang}, {Wang}, {Tomamichel}, and {Duan}]{8707001}
K.~{Fang}, X.~{Wang}, M.~{Tomamichel}, and R.~{Duan}.
\newblock Non-asymptotic entanglement distillation.
\newblock \emph{IEEE Transactions on Information Theory}, 65\penalty0
  (10):\penalty0 6454--6465, Oct 2019.
\newblock \doi{10.1109/TIT.2019.2914688}.
\newblock URL \url{https://ieeexplore.ieee.org/document/8707001}.

\bibitem[Goold et~al.(2016)Goold, Huber, Riera, del Rio, and
  Skrzypczyk]{goold2016role}
John Goold, Marcus Huber, Arnau Riera, L{\'{\i}}dia del Rio, and Paul
  Skrzypczyk.
\newblock The role of quantum information in thermodynamics{\textemdash}a
  topical review.
\newblock \emph{Journal of Physics A: Mathematical and Theoretical},
  49\penalty0 (14):\penalty0 143001, feb 2016.
\newblock \doi{10.1088/1751-8113/49/14/143001}.
\newblock URL
  \url{https://iopscience.iop.org/article/10.1088/1751-8113/49/14/143001}.

\bibitem[Harrow and Montanaro(2017)]{H3}
Aram~W. Harrow and Ashley Montanaro.
\newblock Quantum computational supremacy.
\newblock \emph{Nature}, 549:\penalty0 203--209, Sep 2017.
\newblock \doi{10.1038/nature23458}.
\newblock URL \url{https://www.nature.com/articles/nature23458}.

\bibitem[Horodecki et~al.(2005)Horodecki, Oppenheim, and Winter]{Horodecki2005}
Michał Horodecki, Jonathan Oppenheim, and Andreas Winter.
\newblock Partial quantum information.
\newblock \emph{Nature}, 436\penalty0 (7051):\penalty0 673--676, Aug 2005.
\newblock \doi{10.1038/nature03909}.
\newblock URL \url{http://www.nature.com/doifinder/10.1038/nature03909}.

\bibitem[Horodecki et~al.(2006)Horodecki, Oppenheim, and Winter]{Horodecki2006}
Michał Horodecki, Jonathan Oppenheim, and Andreas Winter.
\newblock Quantum state merging and negative information.
\newblock \emph{Comm. Math. Phys.}, 269\penalty0 (1):\penalty0 107--136, Nov
  2006.
\newblock \doi{10.1007/s00220-006-0118-x}.
\newblock URL \url{http://link.springer.com/10.1007/s00220-006-0118-x}.

\bibitem[Horodecki(1997)]{H1}
Pawel Horodecki.
\newblock Separability criterion and inseparable mixed states with positive
  partial transposition.
\newblock \emph{Physics Letters A}, 232\penalty0 (5):\penalty0 333 -- 339,
  1997.
\newblock \doi{10.1016/S0375-9601(97)00416-7}.
\newblock URL
  \url{http://www.sciencedirect.com/science/article/pii/S0375960197004167}.

\bibitem[Horodecki et~al.(2009)Horodecki, Horodecki, Horodecki, and
  Horodecki]{H2}
Ryszard Horodecki, Pawe\l{} Horodecki, Micha\l{} Horodecki, and Karol
  Horodecki.
\newblock Quantum entanglement.
\newblock \emph{Rev. Mod. Phys.}, 81:\penalty0 865--942, Jun 2009.
\newblock \doi{10.1103/RevModPhys.81.865}.
\newblock URL \url{https://link.aps.org/doi/10.1103/RevModPhys.81.865}.

\bibitem[Huelga and Plenio(2013)]{doi:10.1080/00405000.2013.829687}
S.~F. Huelga and M.~B. Plenio.
\newblock Vibrations, quanta and biology.
\newblock \emph{Contemporary Physics}, 54\penalty0 (4):\penalty0 181--207,
  2013.
\newblock \doi{10.1080/00405000.2013.829687}.
\newblock URL
  \url{https://www.tandfonline.com/doi/abs/10.1080/00405000.2013.829687}.

\bibitem[Kuroiwa and Yamasaki(2020)]{K1}
Kohdai Kuroiwa and Hayata Yamasaki.
\newblock General {Q}uantum {R}esource {T}heories: {D}istillation, {F}ormation
  and {C}onsistent {R}esource {M}easures.
\newblock \emph{{Quantum}}, 4:\penalty0 355, November 2020.
\newblock \doi{10.22331/q-2020-11-01-355}.
\newblock URL \url{https://quantum-journal.org/papers/q-2020-11-01-355/}.

\bibitem[Kuroiwa and Yamasaki(2021)]{kuroiwa2021consistent}
Kohdai Kuroiwa and Hayata Yamasaki.
\newblock Consistent measures of general quantum resources: Discord,
  non-markovianity, and non-gaussianity.
\newblock arXiv:2103.05665, Mar 2021.
\newblock URL \url{https://arxiv.org/abs/2103.05665}.

\bibitem[Lami(2020)]{8863412}
Ludovico Lami.
\newblock Completing the grand tour of asymptotic quantum coherence
  manipulation.
\newblock \emph{IEEE Transactions on Information Theory}, 66\penalty0
  (4):\penalty0 2165--2183, 2020.
\newblock \doi{10.1109/TIT.2019.2945798}.
\newblock URL \url{https://ieeexplore.ieee.org/document/8863412}.

\bibitem[Lami et~al.(2020)Lami, Takagi, and Adesso]{PhysRevA.101.052305}
Ludovico Lami, Ryuji Takagi, and Gerardo Adesso.
\newblock Assisted concentration of gaussian resources.
\newblock \emph{Phys. Rev. A}, 101:\penalty0 052305, May 2020.
\newblock \doi{10.1103/PhysRevA.101.052305}.
\newblock URL \url{https://link.aps.org/doi/10.1103/PhysRevA.101.052305}.

\bibitem[L{\"{o}}fberg(2004)]{L}
J.~L{\"{o}}fberg.
\newblock Yalmip : a toolbox for modeling and optimization in matlab.
\newblock In \emph{2004 IEEE International Conference on Robotics and
  Automation (IEEE Cat. No.04CH37508)}, pages 284--289, Taipei, Taiwan, 2004.
\newblock \doi{10.1109/CACSD.2004.1393890}.
\newblock URL \url{https://ieeexplore.ieee.org/document/1393890}.
\newblock https://yalmip.github.io/.

\bibitem[Luo et~al.(2019)Luo, Li, and Hsieh]{PhysRevA.99.042306}
Yu~Luo, Yongming Li, and Min-Hsiu Hsieh.
\newblock Inequivalent multipartite coherence classes and two operational
  coherence monotones.
\newblock \emph{Phys. Rev. A}, 99:\penalty0 042306, Apr 2019.
\newblock \doi{10.1103/PhysRevA.99.042306}.
\newblock URL \url{https://link.aps.org/doi/10.1103/PhysRevA.99.042306}.

\bibitem[Ma et~al.(2016)Ma, Yadin, Girolami, Vedral, and
  Gu]{PhysRevLett.116.160407}
Jiajun Ma, Benjamin Yadin, Davide Girolami, Vlatko Vedral, and Mile Gu.
\newblock Converting coherence to quantum correlations.
\newblock \emph{Phys. Rev. Lett.}, 116:\penalty0 160407, Apr 2016.
\newblock \doi{10.1103/PhysRevLett.116.160407}.
\newblock URL \url{https://link.aps.org/doi/10.1103/PhysRevLett.116.160407}.

\bibitem[Matera et~al.(2016)Matera, Egloff, Killoran, and Plenio]{Matera_2016}
J~M Matera, D~Egloff, N~Killoran, and M~B Plenio.
\newblock Coherent control of quantum systems as a resource theory.
\newblock \emph{Quantum Science and Technology}, 1\penalty0 (1):\penalty0
  01LT01, aug 2016.
\newblock \doi{10.1088/2058-9565/1/1/01lt01}.
\newblock URL
  \url{https://iopscience.iop.org/article/10.1088/2058-9565/1/1/01LT01}.

\bibitem[Modi et~al.(2012)Modi, Brodutch, Cable, Paterek, and
  Vedral]{RevModPhys.84.1655}
Kavan Modi, Aharon Brodutch, Hugo Cable, Tomasz Paterek, and Vlatko Vedral.
\newblock The classical-quantum boundary for correlations: Discord and related
  measures.
\newblock \emph{Rev. Mod. Phys.}, 84:\penalty0 1655--1707, Nov 2012.
\newblock \doi{10.1103/RevModPhys.84.1655}.
\newblock URL \url{https://link.aps.org/doi/10.1103/RevModPhys.84.1655}.

\bibitem[Morris et~al.(2019)Morris, Lami, and Adesso]{PhysRevLett.122.130601}
Benjamin Morris, Ludovico Lami, and Gerardo Adesso.
\newblock Assisted work distillation.
\newblock \emph{Phys. Rev. Lett.}, 122:\penalty0 130601, Apr 2019.
\newblock \doi{10.1103/PhysRevLett.122.130601}.
\newblock URL \url{https://link.aps.org/doi/10.1103/PhysRevLett.122.130601}.

\bibitem[O'Donoghue et~al.(2016)O'Donoghue, Chu, Parikh, and Boyd]{O}
B.~O'Donoghue, E.~Chu, N.~Parikh, and S.~Boyd.
\newblock Conic optimization via operator splitting and homogeneous self-dual
  embedding.
\newblock \emph{Journal of Optimization Theory and Applications}, 169\penalty0
  (3):\penalty0 1042--1068, June 2016.
\newblock \doi{10.1007/s10957-016-0892-3}.
\newblock URL
  \url{https://link.springer.com/article/10.1007/s10957-016-0892-3}.
\newblock https://github.com/cvxgrp/scs.

\bibitem[Pirandola et~al.(2020)Pirandola, Andersen, Banchi, Berta, Bunandar,
  Colbeck, Englund, Gehring, Lupo, Ottaviani, Pereira, Razavi, Shaari,
  Tomamichel, Usenko, Vallone, Villoresi, and Wallden]{Pirandola2019}
S.~Pirandola, U.~L. Andersen, L.~Banchi, M.~Berta, D.~Bunandar, R.~Colbeck,
  D.~Englund, T.~Gehring, C.~Lupo, C.~Ottaviani, J.~L. Pereira, M.~Razavi,
  J.~Shamsul Shaari, M.~Tomamichel, V.~C. Usenko, G.~Vallone, P.~Villoresi, and
  P.~Wallden.
\newblock Advances in quantum cryptography.
\newblock \emph{Adv. Opt. Photon.}, 12\penalty0 (4):\penalty0 1012--1236, Dec
  2020.
\newblock \doi{10.1364/AOP.361502}.
\newblock URL \url{http://aop.osa.org/abstract.cfm?URI=aop-12-4-1012}.

\bibitem[Plbnio and Virmani(2007)]{P1}
Martin~B. Plbnio and Shashank Virmani.
\newblock An introduction to entanglement measures.
\newblock \emph{Quantum Inf. Comput.}, 7\penalty0 (1):\penalty0 1--51, Jan
  2007.
\newblock \doi{10.26421/QIC7.1-2-1}.
\newblock URL \url{http://www.rintonpress.com/journals/doi/QIC7.1-2-1.html}.

\bibitem[\r{A}berg(2004)]{ABERG2004326}
Johan \r{A}berg.
\newblock Subspace preservation, subspace locality, and gluing of completely
  positive maps.
\newblock \emph{Annals of Physics}, 313\penalty0 (2):\penalty0 326 -- 367,
  2004.
\newblock \doi{10.1016/j.aop.2004.04.013}.
\newblock URL
  \url{http://www.sciencedirect.com/science/article/pii/S0003491604000879}.

\bibitem[\r{A}berg(2006)]{Aberg2006}
Johan \r{A}berg.
\newblock Quantifying superposition.
\newblock arXiv:quant-ph/0612146, Dec 2006.
\newblock URL \url{https://arxiv.org/abs/quant-ph/0612146}.

\bibitem[Rains(1997)]{Rains1997}
Eric~M. Rains.
\newblock Entanglement purification via separable superoperators.
\newblock arXiv:quant-ph/9707002, Jul 1997.
\newblock URL \url{https://arxiv.org/abs/quant-ph/9707002}.

\bibitem[Rains(2001)]{Rains2000}
Eric~M. Rains.
\newblock A semidefinite program for distillable entanglement.
\newblock \emph{IEEE Transactions on Information Theory}, 47\penalty0
  (7):\penalty0 2921--2933, Nov 2001.
\newblock \doi{10.1109/18.959270}.
\newblock URL \url{https://ieeexplore.ieee.org/abstract/document/959270}.

\bibitem[Regula et~al.(2018{\natexlab{a}})Regula, Fang, Wang, and
  Adesso]{PhysRevLett.121.010401}
Bartosz Regula, Kun Fang, Xin Wang, and Gerardo Adesso.
\newblock One-shot coherence distillation.
\newblock \emph{Phys. Rev. Lett.}, 121:\penalty0 010401, Jul
  2018{\natexlab{a}}.
\newblock \doi{10.1103/PhysRevLett.121.010401}.
\newblock URL \url{https://link.aps.org/doi/10.1103/PhysRevLett.121.010401}.

\bibitem[Regula et~al.(2018{\natexlab{b}})Regula, Lami, and Streltsov]{R1}
Bartosz Regula, Ludovico Lami, and Alexander Streltsov.
\newblock Nonasymptotic assisted distillation of quantum coherence.
\newblock \emph{Phys. Rev. A}, 98:\penalty0 052329, Nov 2018{\natexlab{b}}.
\newblock \doi{10.1103/PhysRevA.98.052329}.
\newblock URL \url{https://link.aps.org/doi/10.1103/PhysRevA.98.052329}.

\bibitem[Regula et~al.(2019)Regula, Fang, Wang, and Gu]{Regula_2019}
Bartosz Regula, Kun Fang, Xin Wang, and Mile Gu.
\newblock One-shot entanglement distillation beyond local operations and
  classical communication.
\newblock \emph{New Journal of Physics}, 21\penalty0 (10):\penalty0 103017, oct
  2019.
\newblock \doi{10.1088/1367-2630/ab4732}.
\newblock URL
  \url{https://iopscience.iop.org/article/10.1088/1367-2630/ab4732}.

\bibitem[Smolin et~al.(2005)Smolin, Verstraete, and Winter]{S1}
John~A. Smolin, Frank Verstraete, and Andreas Winter.
\newblock Entanglement of assistance and multipartite state distillation.
\newblock \emph{Phys. Rev. A}, 72:\penalty0 052317, Nov 2005.
\newblock \doi{10.1103/PhysRevA.72.052317}.
\newblock URL \url{https://link.aps.org/doi/10.1103/PhysRevA.72.052317}.

\bibitem[Streltsov et~al.(2016)Streltsov, Chitambar, Rana, Bera, Winter, and
  Lewenstein]{PhysRevLett.116.240405}
A.~Streltsov, E.~Chitambar, S.~Rana, M.~N. Bera, A.~Winter, and M.~Lewenstein.
\newblock Entanglement and coherence in quantum state merging.
\newblock \emph{Phys. Rev. Lett.}, 116:\penalty0 240405, Jun 2016.
\newblock \doi{10.1103/PhysRevLett.116.240405}.
\newblock URL \url{https://link.aps.org/doi/10.1103/PhysRevLett.116.240405}.

\bibitem[Streltsov et~al.(2017{\natexlab{a}})Streltsov, Adesso, and Plenio]{S3}
Alexander Streltsov, Gerardo Adesso, and Martin~B. Plenio.
\newblock Colloquium: Quantum coherence as a resource.
\newblock \emph{Rev. Mod. Phys.}, 89:\penalty0 041003, Oct 2017{\natexlab{a}}.
\newblock \doi{10.1103/RevModPhys.89.041003}.
\newblock URL \url{https://link.aps.org/doi/10.1103/RevModPhys.89.041003}.

\bibitem[Streltsov et~al.(2017{\natexlab{b}})Streltsov, Rana, Bera, and
  Lewenstein]{S2}
Alexander Streltsov, Swapan Rana, Manabendra~Nath Bera, and Maciej Lewenstein.
\newblock Towards resource theory of coherence in distributed scenarios.
\newblock \emph{Phys. Rev. X}, 7:\penalty0 011024, Mar 2017{\natexlab{b}}.
\newblock \doi{10.1103/PhysRevX.7.011024}.
\newblock URL \url{https://link.aps.org/doi/10.1103/PhysRevX.7.011024}.

\bibitem[Vedral et~al.(1997)Vedral, Plenio, Rippin, and
  Knight]{PhysRevLett.78.2275}
V.~Vedral, M.~B. Plenio, M.~A. Rippin, and P.~L. Knight.
\newblock Quantifying entanglement.
\newblock \emph{Phys. Rev. Lett.}, 78:\penalty0 2275--2279, Mar 1997.
\newblock \doi{10.1103/PhysRevLett.78.2275}.
\newblock URL \url{https://link.aps.org/doi/10.1103/PhysRevLett.78.2275}.

\bibitem[Vijayan et~al.(2018)Vijayan, Chitambar, and Hsieh]{M1}
Madhav~Krishnan Vijayan, Eric Chitambar, and Min-Hsiu Hsieh.
\newblock One-shot assisted concentration of coherence.
\newblock \emph{Journal of Physics A: Mathematical and Theoretical},
  51\penalty0 (41):\penalty0 414001, Sep 2018.
\newblock \doi{10.1088/1751-8121/aadc21}.
\newblock URL
  \url{https://iopscience.iop.org/article/10.1088/1751-8121/aadc21}.

\bibitem[Wang and Duan(2016)]{PhysRevA.94.050301}
Xin Wang and Runyao Duan.
\newblock Improved semidefinite programming upper bound on distillable
  entanglement.
\newblock \emph{Phys. Rev. A}, 94:\penalty0 050301, Nov 2016.
\newblock \doi{10.1103/PhysRevA.94.050301}.
\newblock URL \url{https://link.aps.org/doi/10.1103/PhysRevA.94.050301}.

\bibitem[Wang and Duan(2017)]{PhysRevLett.119.180506}
Xin Wang and Runyao Duan.
\newblock Irreversibility of asymptotic entanglement manipulation under quantum
  operations completely preserving positivity of partial transpose.
\newblock \emph{Phys. Rev. Lett.}, 119:\penalty0 180506, Nov 2017.
\newblock \doi{10.1103/PhysRevLett.119.180506}.
\newblock URL \url{https://link.aps.org/doi/10.1103/PhysRevLett.119.180506}.

\bibitem[Wang and Wilde(2018)]{Wang2018}
Xin Wang and Mark~M. Wilde.
\newblock Exact entanglement cost of quantum states and channels under
  ppt-preserving operations.
\newblock arXiv:1809.09592, Sep 2018.
\newblock URL \url{https://arxiv.org/abs/1809.09592}.

\bibitem[Wang and Wilde(2020)]{PhysRevLett.125.040502}
Xin Wang and Mark~M. Wilde.
\newblock Cost of quantum entanglement simplified.
\newblock \emph{Phys. Rev. Lett.}, 125:\penalty0 040502, Jul 2020.
\newblock \doi{10.1103/PhysRevLett.125.040502}.
\newblock URL \url{https://link.aps.org/doi/10.1103/PhysRevLett.125.040502}.

\bibitem[Watrous(2009)]{W5}
John Watrous.
\newblock \emph{Quantum Computational Complexity}, pages 7174--7201.
\newblock Springer New York, New York, NY, 2009.
\newblock ISBN 978-0-387-30440-3.
\newblock \doi{10.1007/978-0-387-30440-3_428}.
\newblock URL
  \url{https://link.springer.com/referenceworkentry/10.1007\%2F978-0-387-30440-3_428}.

\bibitem[Watrous(2018)]{Watrous:2018:TQI:3240076}
John Watrous.
\newblock \emph{The Theory of Quantum Information}.
\newblock Cambridge University Press, New York, NY, USA, 1st edition, 2018.
\newblock ISBN 1107180562, 9781107180567.
\newblock \doi{10.1017/9781316848142}.
\newblock URL \url{https://cs.uwaterloo.ca/~watrous/TQI/}.

\bibitem[Wehner et~al.(2018)Wehner, Elkouss, and Hanson]{Wehnereaam9288}
Stephanie Wehner, David Elkouss, and Ronald Hanson.
\newblock Quantum internet: A vision for the road ahead.
\newblock \emph{Science}, 362\penalty0 (6412), 2018.
\newblock \doi{10.1126/science.aam9288}.
\newblock URL \url{https://science.sciencemag.org/content/362/6412/eaam9288}.

\bibitem[Winter and Yang(2016)]{PhysRevLett.116.120404}
Andreas Winter and Dong Yang.
\newblock Operational resource theory of coherence.
\newblock \emph{Phys. Rev. Lett.}, 116:\penalty0 120404, Mar 2016.
\newblock \doi{10.1103/PhysRevLett.116.120404}.
\newblock URL \url{https://link.aps.org/doi/10.1103/PhysRevLett.116.120404}.

\bibitem[Wu et~al.(2017)Wu, Hou, Zhong, Yuan, Xiang, Li, and Guo]{Wu:17}
Kang-Da Wu, Zhibo Hou, Han-Sen Zhong, Yuan Yuan, Guo-Yong Xiang, Chuan-Feng Li,
  and Guang-Can Guo.
\newblock Experimentally obtaining maximal coherence via assisted distillation
  process.
\newblock \emph{Optica}, 4\penalty0 (4):\penalty0 454--459, Apr 2017.
\newblock \doi{10.1364/OPTICA.4.000454}.
\newblock URL
  \url{http://www.osapublishing.org/optica/abstract.cfm?URI=optica-4-4-454}.

\bibitem[Wu et~al.(2018)Wu, Hou, Zhao, Xiang, Li, Guo, Ma, He, Thompson, and
  Gu]{PhysRevLett.121.050401}
Kang-Da Wu, Zhibo Hou, Yuan-Yuan Zhao, Guo-Yong Xiang, Chuan-Feng Li, Guang-Can
  Guo, Jiajun Ma, Qiong-Yi He, Jayne Thompson, and Mile Gu.
\newblock Experimental cyclic interconversion between coherence and quantum
  correlations.
\newblock \emph{Phys. Rev. Lett.}, 121:\penalty0 050401, Aug 2018.
\newblock \doi{10.1103/PhysRevLett.121.050401}.
\newblock URL \url{https://link.aps.org/doi/10.1103/PhysRevLett.121.050401}.

\bibitem[Wu et~al.(2020)Wu, Theurer, Xiang, Li, Guo, Plenio, and
  Streltsov]{npj_6_22}
Kang-Da Wu, Thomas Theurer, Guo-Yong Xiang, Chuan-Feng Li, Guang-Can Guo,
  Martin~B. Plenio, and Alexander Streltsov.
\newblock Quantum coherence and state conversion: theory and experiment.
\newblock \emph{npj Quantum Information}, 6:\penalty0 22, Feb 2020.
\newblock \doi{10.1038/s41534-020-0250-z}.
\newblock URL \url{https://www.nature.com/articles/s41534-020-0250-z}.

\bibitem[Yadin et~al.(2016)Yadin, Ma, Girolami, Gu, and
  Vedral]{PhysRevX.6.041028}
Benjamin Yadin, Jiajun Ma, Davide Girolami, Mile Gu, and Vlatko Vedral.
\newblock Quantum processes which do not use coherence.
\newblock \emph{Phys. Rev. X}, 6:\penalty0 041028, Nov 2016.
\newblock \doi{10.1103/PhysRevX.6.041028}.
\newblock URL \url{https://link.aps.org/doi/10.1103/PhysRevX.6.041028}.

\bibitem[{Yamasaki} and {Murao}(2019)]{8590809}
H.~{Yamasaki} and M.~{Murao}.
\newblock Quantum state merging for arbitrarily small-dimensional systems.
\newblock \emph{IEEE Transactions on Information Theory}, 65\penalty0
  (6):\penalty0 3950--3972, June 2019.
\newblock \doi{10.1109/TIT.2018.2889829}.
\newblock URL \url{https://ieeexplore.ieee.org/document/8590809}.

\bibitem[Yamasaki(2019)]{Y5}
Hayata Yamasaki.
\newblock \emph{Entanglement theory in distributed quantum information
  processing}.
\newblock PhD thesis, The University of Tokyo, 2019.
\newblock URL \url{https://arxiv.org/abs/1903.09655}.

\bibitem[Yamasaki and Murao(2019{\natexlab{a}})]{Y1}
Hayata Yamasaki and Mio Murao.
\newblock Spread quantum information in one-shot quantum state merging.
\newblock arXiv:1903.03619, Mar 2019{\natexlab{a}}.
\newblock URL \url{https://arxiv.org/abs/1903.03619}.

\bibitem[Yamasaki and Murao(2019{\natexlab{b}})]{doi:10.1002/qute.201800066}
Hayata Yamasaki and Mio Murao.
\newblock Distributed encoding and decoding of quantum information over
  networks.
\newblock \emph{Advanced Quantum Technologies}, 2\penalty0 (1-2):\penalty0
  1800066, 2019{\natexlab{b}}.
\newblock \doi{10.1002/qute.201800066}.
\newblock URL
  \url{https://onlinelibrary.wiley.com/doi/abs/10.1002/qute.201800066}.
\newblock and references therein.

\bibitem[Yamasaki et~al.(2017)Yamasaki, Soeda, and Murao]{PhysRevA.96.032330}
Hayata Yamasaki, Akihito Soeda, and Mio Murao.
\newblock Graph-associated entanglement cost of a multipartite state in exact
  and finite-block-length approximate constructions.
\newblock \emph{Phys. Rev. A}, 96:\penalty0 032330, Sep 2017.
\newblock \doi{10.1103/PhysRevA.96.032330}.
\newblock URL \url{https://link.aps.org/doi/10.1103/PhysRevA.96.032330}.

\bibitem[{Yu} et~al.(2014){Yu}, {Duan}, and {Ying}]{6747300}
N.~{Yu}, R.~{Duan}, and M.~{Ying}.
\newblock Distinguishability of quantum states by positive operator-valued
  measures with positive partial transpose.
\newblock \emph{IEEE Transactions on Information Theory}, 60\penalty0
  (4):\penalty0 2069--2079, April 2014.
\newblock \doi{10.1109/TIT.2014.2307575}.
\newblock URL \url{https://ieeexplore.ieee.org/document/6747300}.

\bibitem[Yu et~al.(2012)Yu, Duan, and Ying]{PhysRevLett.109.020506}
Nengkun Yu, Runyao Duan, and Mingsheng Ying.
\newblock Four locally indistinguishable ququad-ququad orthogonal maximally
  entangled states.
\newblock \emph{Phys. Rev. Lett.}, 109:\penalty0 020506, Jul 2012.
\newblock \doi{10.1103/PhysRevLett.109.020506}.
\newblock URL \url{https://link.aps.org/doi/10.1103/PhysRevLett.109.020506}.

\bibitem[Zhao et~al.(2018)Zhao, Liu, Yuan, Chitambar, and Ma]{Z1}
Qi~Zhao, Yunchao Liu, Xiao Yuan, Eric Chitambar, and Xiongfeng Ma.
\newblock One-shot coherence dilution.
\newblock \emph{Phys. Rev. Lett.}, 120:\penalty0 070403, Feb 2018.
\newblock \doi{10.1103/PhysRevLett.120.070403}.
\newblock URL \url{https://link.aps.org/doi/10.1103/PhysRevLett.120.070403}.

\end{thebibliography}

\end{document}